\documentclass[preprint]{elsarticle} % options: preprint or review

\makeatletter
	\def\ps@pprintTitle{%
 	\let\@oddhead\@empty
	\let\@evenhead\@empty
	\def\@oddfoot{\centerline{\thepage}}%
	\let\@evenfoot\@oddfoot}
\makeatother

\usepackage{etoolbox}
\patchcmd{\MaketitleBox}{\footnotesize\itshape\elsaddress\par\vskip36pt}{\footnotesize\itshape\elsaddress\par\parbox[b][36pt]{\linewidth}{\vfill\hfill\textnormal{\today}\hfill\null\vfill}}{}{}%
\patchcmd{\pprintMaketitle}{\footnotesize\itshape\elsaddress\par\vskip36pt}{\footnotesize\itshape\elsaddress\par\parbox[b][36pt]{\linewidth}{\vfill\hfill\textnormal{\today}\hfill\null\vfill}}{}{}%

\usepackage[%pdfmark,%
            colorlinks=true,%
            breaklinks=true,%
            linkcolor=blue,
            urlcolor=blue,%
            citecolor=blue,%
            pdftitle={Title of paper}, 
            pdfkeywords={Keywords},	
            pdfauthor={Authors},		
            bookmarksopen=false,
            pdfpagemode=UseNone]{hyperref}
            
\usepackage[margin = 1.2in]{geometry}
\usepackage[T1]{fontenc}
\usepackage[english]{babel}
\usepackage[utf8]{inputenc}
\usepackage{mathtools}
\usepackage{amsmath}
\usepackage{amsfonts}
\usepackage{amsthm}
\usepackage{amssymb}
\usepackage{array}
\usepackage{algorithm} %ctan.org\pkg\algorithms
\usepackage{algorithmicx}
\usepackage{algpseudocode}
\usepackage{subcaption}
\usepackage{siunitx}
\usepackage{arydshln}
\usepackage{nomencl}
\usepackage{framed}  % For creating a box around the nomenclature
\usepackage{multicol}  % For creating two columns

\usepackage{mathrsfs}
\usepackage{upgreek}

\usepackage{color}
\usepackage{xcolor}
\usepackage{url}
\usepackage{cleveref}
\usepackage{graphicx}
\usepackage{breakcites}
\usepackage{soul} % for using the \ul command with linebreak
\usepackage{enumitem}
\usepackage[title,titletoc,toc]{appendix}
\usepackage{tikz}
\usetikzlibrary{arrows.meta, positioning}
\usepackage{pgfplots}
\pgfplotsset{compat=1.16}

\usepackage{tabularx}

\usepackage[textsize=tiny]{todonotes}

\definecolor{darkgreen}{rgb}{0.0, 0.5, 0.0}

\newtheorem{assumption}{Assumption}
\newtheorem{proposition}{Proposition}

\renewenvironment{proof}%
{\noindent {\textbf{Proof}:} }%
{\hfill $\Box$ \\[1ex] }

\newcommand{\bit}{\begin{itemize}}
\newcommand{\eit}{\end{itemize}}
\newcommand{\ben}{\begin{enumerate}}
\newcommand{\een}{\end{enumerate}}

\newcommand {\real} {\mathbb{R}}

\DeclareMathOperator*{\Cov}{\mathbb{C}\text{ov}}
\newcommand{\rd}{\text{\upshape d}} % Used for "d" in integration "dx".)

\newcommand{\bA}{\ensuremath{\mathbf{A}}}
\newcommand{\bB}{\ensuremath{\mathbf{B}}}
\newcommand{\bC}{\ensuremath{\mathbf{C}}}
\newcommand{\bD}{\ensuremath{\mathbf{D}}}
\newcommand{\bE}{\ensuremath{\mathbf{E}}}

\newcommand{\bH}{\ensuremath{\mathbf{H}}}
\newcommand{\bI}{\ensuremath{\mathbf{I}}}

\newcommand{\bK}{\ensuremath{\mathbf{K}}}

\newcommand{\bM}{\ensuremath{\mathbf{M}}}
\newcommand{\bN}{\ensuremath{\mathbf{N}}}
\newcommand{\bO}{\ensuremath{\mathbf{O}}}
\newcommand{\bP}{\ensuremath{\mathbf{P}}}
\newcommand{\bQ}{\ensuremath{\mathbf{Q}}}
\newcommand{\bR}{\ensuremath{\mathbf{R}}}

\newcommand{\bU}{\ensuremath{\mathbf{U}}}

\newcommand{\bW}{\ensuremath{\mathbf{W}}}
\newcommand{\bX}{\ensuremath{\mathbf{X}}}

\newcommand{\ba}{\ensuremath{\mathbf{a}}}

\newcommand{\bu}{\ensuremath{\mathbf{u}}}

\newcommand {\bgamma} {\mbox{\boldmath $\gamma$}}

\newcommand {\bPhi} {\mbox{\boldmath $\Phi$}}

\newcommand {\bxi} {\mbox{\boldmath $\xi$}}

\newcommand{\cE}{\ensuremath{\mathcal{E}}}
\newcommand{\cF}{\ensuremath{\mathcal{F}}}

\newcommand{\cN}{\ensuremath{\mathcal{N}}}

\newcommand{\cT}{\ensuremath{\mathcal{T}}}

\graphicspath{{./figures/}}

\begin{document}
	
    \begin{frontmatter}
        
        \title{Stochastic Operator Inference for reduced-order modeling of capillary wave turbulence using experimental measurements
        }
        %\tnotetext[mytitlenote]{}
        
        % \author[affil1]{Author A}
        % \author[affil2]{Author B}
        
        \author[affil1]{Hyeonghun Kim\corref{cor1}}
        \ead{hyk049@ucsd.edu}
        \author[affil1]{Lei Zhang}
        \author[affil2]{James Friend}
        \author[affil1]{Boris Kramer}
        
        \cortext[cor1]{Corresponding author}
        
        % \address[affil1] {Research Center of the Flat Earth Society}
        % \address[affil2] {Elliptic Science, Inc.}
        
        \address[affil1]{Department of Mechanical and Aerospace Engineering, University of California San Diego, CA, United States}
        \address[affil2]{Mechanical Engineering and Materials Science, Washington University in St. Louis, MO, United States}
        
        \begin{abstract}
        Modeling complex physical phenomena directly from experimental data poses fundamental challenges: measurement devices introduce noise and biases into the data, the governing equations are often unknown or intractable, and the dynamics observed experimentally may exhibit stochastic behavior.
        In this paper, we consider capillary wave turbulence---an example of nonlinear wave interactions at a microfluidic interface---measured by ultra-high-speed digital holographic microscopy.
        With the goal to learn an efficient model directly from these data, we adapt a stochastic extension of Operator Inference to learn low-dimensional stochastic differential equation representations of the microscale wave dynamics from experimental measurements. Each experiment is repeated 16 times, and 10 different conditions are created by changing the nondimensional acoustic capillary number through an excitation device.
        In addition, we propose a new strategy to select the reduced-order model dimension, based on both mean and covariance errors. We further add Tikhonov regularization to the stochastic Operator Inference framework and show that, beyond its conventional role as a numerical stabilization technique in deterministic settings, it has a physically meaningful interpretation as controlling the spectral content of the learned stochastic dynamics.
        We demonstrate that the resulting stochastic reduced-order models faithfully capture salient physical features of capillary wave turbulence across a range of experimental conditions.
        \end{abstract}	
        
        \begin{keyword}
            Reduced-order modeling \sep stochastic modeling \sep Operator Inference \sep capillary wave turbulence \sep experimental system \sep digital holographic microscopy \sep data-driven method
        \end{keyword}
		
    \end{frontmatter}

% Define custom colors to match Matplotlib
\definecolor{C00}{rgb}{0.121, 0.467, 0.705} % skyblue
\definecolor{C01}{rgb}{1.000, 0.498, 0.055} % Orange
\definecolor{C02}{rgb}{0.172, 0.627, 0.173} % Green
\definecolor{C03}{rgb}{0.839, 0.153, 0.157} % red
\definecolor{C04}{rgb}{0.580, 0.404, 0.741} % purple
\definecolor{C05}{rgb}{0.549, 0.337, 0.294} % brown
\definecolor{C06}{rgb}{0.890, 0.467, 0.761} % pink
\definecolor{C07}{rgb}{0.498, 0.498, 0.498} % gray
\definecolor{C08}{rgb}{0.737, 0.741, 0.133} % yellow
\definecolor{C09}{rgb}{0.090, 0.745, 0.812} % cyan
\definecolor{C10}{rgb}{0.121, 0.467, 0.705} % skyblue

%%%%%%%%%%%%%%%%%%%%%%%%%%%%%%%%%%%%%%%%
%%%%%%%%%%%%%%%%%%%%%%%%%%%%%%%%%%%%%%%
\section{Introduction} \label{sec:intro}
%%%%%%%%%%%%%%%%%%%%%%%%%%%%%%%%%%%%%%%

% 1. What is the problem?
Advances in high-fidelity numerical simulations and experimental techniques have enabled increasingly sophisticated characterization and understanding of complex physical phenomena.
Despite these advances, obtaining such high-fidelity descriptions remains computationally and experimentally expensive.
High-fidelity numerical simulations often require solving high-dimensional systems of semi-discretized partial differential equations (PDEs), while experimental campaigns demand substantial time, resources, and careful control of operating conditions.
This often poses a big challenge for real-time prediction and outer-loop tasks, including uncertainty quantification, design optimization,~etc.
\par
Reduced-order models (ROMs) address this challenge by constructing low-dimensional surrogate models that enable computationally efficient reconstruction and prediction of high-dimensional system states, while preserving their dominant dynamics; see, e.g.,~\cite{benner2015survey, quarteroni2014reduced, brunton2022data, Kramer2026digitaltwin}.
In particular, frameworks that learn ROMs from data have provided a promising solution in engineering and science~\cite{ghattas2021learning}.
Over the past decade, several ROM learning methods---including Dynamic Mode Decomposition (DMD)~\cite{schmid2010dynamic}, Sparse Identification of Nonlinear Dynamics (SINDy)~\cite{brunton2016discovering}, and Operator Inference (OpInf)~\cite{PEHERSTORFER2016196, kramer2024learning}---have been successfully applied across many science and engineering problems.
In parallel, deep learning approaches based on neural ordinary differential equations~\cite{chen2018neural} have also been explored to learn ROMs directly from data~\cite{caldana2025neural}, offering a more expressive but less interpretable alternative to these operator-based methods.
The advantages of ROM learning methods become even more powerful in experimental settings, since the experimental systems often do not admit a tractable mathematical model that faithfully reproduces the observed dynamics. Therefore, the corresponding ROMs must be learned directly from measurement data.
\par
%
% 2. Which limitations remain? -- stochasticity in experimental systems
Most ROM approaches have been developed under the assumption that the underlying dynamics are deterministic, with randomness entering only as additive measurement or simulation noise, or uncertain model parameters, in each case generated by random number generators with mostly white noise assumptions.
This assumption is often violated in real experimental systems, where the observed dynamics exhibit intrinsic stochasticity arising from unresolved physical processes, chaotic dynamics, and experimental variability even under carefully controlled environments.
In such cases, randomness is not merely a measurement artifact but an intrinsic feature of the system dynamics. 
\par
%
% 3. What has been done?
Existing efforts have addressed uncertainty in deterministic ROM learning in several approaches. Bayesian formulations of Operator Inference~\cite{guo2022bayesian, MCQUARRIE2025134572}, for example, treat ROM operator learning as a Bayesian inference problem, where the ROM operators are treated as random variables and inferred probabilistically from the observed state data, thereby improving robustness to measurement noise and parameter uncertainty.
However, these methods still assume the underlying dynamics to be deterministic, with uncertainty arising from limited observations or uncertain parameters---not from any inherent stochasticity in the system.
Similarly, the ensemble-SINDy method~\cite{fasel2022ensemble} enables robust model discovery in a noisy and limited-data setting.
Although the method has shown promising results for high-dimensional simulated systems, its application to real-world data has been limited to low-dimensional datasets, such as those approximated by a two-state Lotka--Volterra system.
%
% Data-driven ROMs for experimental systems
In parallel, several ROM learning approaches have been applied to complex experimental systems. For example, \cite{raut2025arousal} used time-delay embedding and Koopman-based methods to model large-scale brain physiology dynamics as a latent dynamical system. 
The authors in~\cite{tomasetto2025reduced} combined an LSTM-based temporal encoder with a shallow decoder network to learn ROMs from sparse sensor measurements, demonstrating strong performance on both simulated and experimental fluid flows.
Within the OpInf framework, a recent work~\cite{peterson2026data} applied the deterministic OpInf framework to experimental data from a pilot-scale methanation reactor.
Despite demonstrating the feasibility of learning ROMs from experimental data, these approaches do not yet explicitly model stochastic latent dynamics to account for the inherent variability present in many experimental systems.
\par
%
% Stochastic approaches
In contrast to these deterministic approaches, several methods have been developed to learn stochastic ROMs.
The authors in~\cite{boninsegna2018sparse} extend SINDy to discover stochastic differential equations (SDEs) latent dynamics models.
Another line of work~\cite{galioto2020bayesian} formulates system identification as probabilistic inference through a hidden Markov model governing discrete-time stochastic dynamics, enabling simultaneous treatment of parameter, model-form, and measurement uncertainty.
The authors in~\cite{Dietrich2021Learning} introduce the approach of identifying the system of SDEs and approximating its drift and diffusion terms through neural networks, by incorporating the stochastic numerical integrators into the training loss functions.
While these methods provide promising directions for learning stochastic systems, their validation has largely focused on relatively low-dimensional synthetic systems. Examples include one-dimensional potentials and projected two-dimensional diffusion processes~\cite{boninsegna2018sparse}; canonical dynamical systems, such as the pendulum, Van der Pol oscillator, Lorenz ’63, and reaction--diffusion equation, all with fewer than three state variables~\cite{galioto2020bayesian}; and other low-dimensional stochastic systems, including a 19-dimensional toy problem~\cite{Dietrich2021Learning}. 
More recently, the authors in~\cite{freitag2025learning} introduced a stochastic extension of the OpInf ROM framework that overcomes this dimensionality limitation by learning SDE-based ROM directly from high-dimensional data. 
However, its validation has thus far been restricted to synthetic high-dimensional data generated by discretizing one- and two-dimensional linear stochastic heat equations, where the governing SDEs are known a priori and tightly controlled noise is inserted at every time step. 
Thus, whether stochastic OpInf ROM learning frameworks generalize to real high-dimensional experimental systems, where governing equations are unknown and the stochasticity is intrinsic, remains unexplored.
\par
In this paper, we address these gaps by extending the stochastic OpInf ROM framework to a previously unexplored setting: high-dimensional experimental data.  
Specifically, we demonstrate the framework on digital holographic microscopy (DHM) measurements of capillary wave turbulence~\cite{orosco2023identification, connacher2024direct, zhang2023onset, blamey2013microscale}, for which neither the governing equations nor a first-principles numerical discretization are available.
%
% Q. Why capillary wave turbulence?
Capillary wave turbulence is a challenging case for this model learning.
Its theory was built on stationary kinetic equations by the pioneering works ~\cite{zakharov1967weak, pushkarev1996turbulence, pushkarev2000turbulence}, which provide a statistical spectral solution for the wave energy distribution.
Since then, the phenomenon has been investigated extensively, both numerically~\cite{deike2014direct, pan2017understanding} and experimentally~\cite{falcon2007observation, falcon2009capillary, deike2012decay}, also see the surveys~\cite{newell2011wave, falcon2022experiments}.
However, these studies validate aggregate spectral quantities, such as the Kolmogorov--Zakharov exponent and energy flux, rather than the realization-specific time evolution of the measured surface height.
Moreover, some aspects of the experimental setup in~\Cref{sec:capillary_wave_turbulence_section} make a first-principles forward model impractical or unidentifiable.
First, the measured surface height is an indirect quantity reconstructed from raw digital holographic images.
Second, the DHM measurements are spatially confined to the small laser-passing region rather than the entire fluid domain, making the effective boundaries of the observed region and the corresponding boundary conditions unknown.
%
% 5. What is the paper about?
\par
The contributions of this work are as follows.
First, we extend the recently introduced stochastic ROM framework based on Operator Inference~\cite{freitag2025learning} from synthetic benchmark problems to the significantly more complex problem of low-dimensional modeling of capillary wave turbulence using experimental DHM measurements. 
This represents the first demonstration of stochastic OpInf on real experimental data. It provides evidence that the framework---together with the extensions we introduce---can learn accurate ROMs directly from high-dimensional measurements, where the governing equations of the underlying real-world stochastic system are unknown.
Second, we augment the stochastic OpInf learning problem from~\cite{freitag2025learning} with a Tikhonov regularization.
While the regularization technique has been widely applied in the deterministic OpInf cases~\cite{McQuarrie_2021, mcquarrie2023nonintrusive, issan2023predicting, SAWANT2023115836, qian2022reduced, KIM2025114418, geelen2023learning, kang2026}, its use in stochastic OpInf is new. 
We show that the regularization hyperparameters carry physical significance in the stochastic ROM setting, beyond their traditional role in improving numerical stability and accuracy.
In particular, we demonstrate that they can be used to (i) alleviate aliasing effects and (ii) to modulate the level of stochastic energy in the ROM.
Third, we propose a new ROM dimension selection strategy based on the tradeoff between two statistical quantities central to stochastic modeling: the weak mean and covariance errors.
This provides an alternative to the conventional approach for selecting the ROM dimension based on snapshot energy retention.
Fourth, we provide a detailed analysis of the covariance evolution of the learned stochastic ROMs under the implicit Euler--Maruyama scheme~\cite{higham2001algorithmic} and reveal the coupled roles of the drift and diffusion operators in governing the learned ROM covariance dynamics. This analysis provides insight into the sources of covariance errors and offers a diagnostic perspective for assessing stochastic ROM accuracy. 
\par
The rest of the paper is organized as follows. \Cref{sec:capillary_wave_turbulence_section} describes the experimental setup and data collection for the capillary wave turbulence. \Cref{sec:stochastic_OpInf_ROM} introduces the stochastic OpInf learning framework, including the role of regularization and the proposed ROM dimension selection method. \Cref{sec:numerical_results} presents the numerical results, and \Cref{sec:conclusion} gives conclusions and an outlook to future~work.

%%%%%%%%%%%%%%%%%%%%%%%% Capillary wave turbulence %%%%%%%%%%%%%%%%%%%%%%%
\section{Experimental capillary wave turbulence measured by digital holographic microscopy}
\label{sec:capillary_wave_turbulence_section}

We first discuss the experimental system used to generate and measure capillary wave turbulence. \Cref{ss:DHM_experimental_setup} describes the digital holographic microscopy (DHM) setup and the experiment details, and \Cref{ss:DHM_experimental_data} presents the resulting measurement datasets.

\subsection{Experimental setup}
\label{ss:DHM_experimental_setup}
Capillary wave turbulence has been characterized experimentally by fluid surface displacement, measured by various techniques, see, e.g.,~\cite[Sec.~6]{falcon2022experiments}.
In our case, surface displacements are measured via digital holographic microscopy (DHM, Lynce\'e Tex SA, Lausanne, Switzerland~\cite{emery2021metrology}), which utilizes holographic imaging methods, coupled with a custom ultra high-speed camera (FASTCAM NOVA S12, Photron, San Diego, CA, United States). Together, they enable high-fidelity holographic imaging of rapidly vibrating microfluidic interfaces.
\Cref{fig:DHM_setup_droplet_shcematic} provides a schematic overview of the DHM system and its main components.
The experimental substrate consists of a transparent, single-crystal lithium niobate piezoelectric transducer with a $127.69^\circ$ Y-rotated cut and a thickness of $500 \, \upmu \textnormal{m}$, designed to operate in the fundamental thickness-mode resonance~\cite{Collingnon2018}. Electrodes are deposited on both sides of the wafer, leaving a $6.35 \, \textnormal{mm}$ diameter transparent window through which the sessile fluid drop is observed. The drop is confined within a $200\,\upmu\textnormal{m}$-thick polyimide annulus with an inner diameter $9{,}525\, \upmu \textnormal{m}$, which enables reproducible experimental conditions across repeated trials.
\begin{figure}
    \centering
    \includegraphics[width=1.00\linewidth]{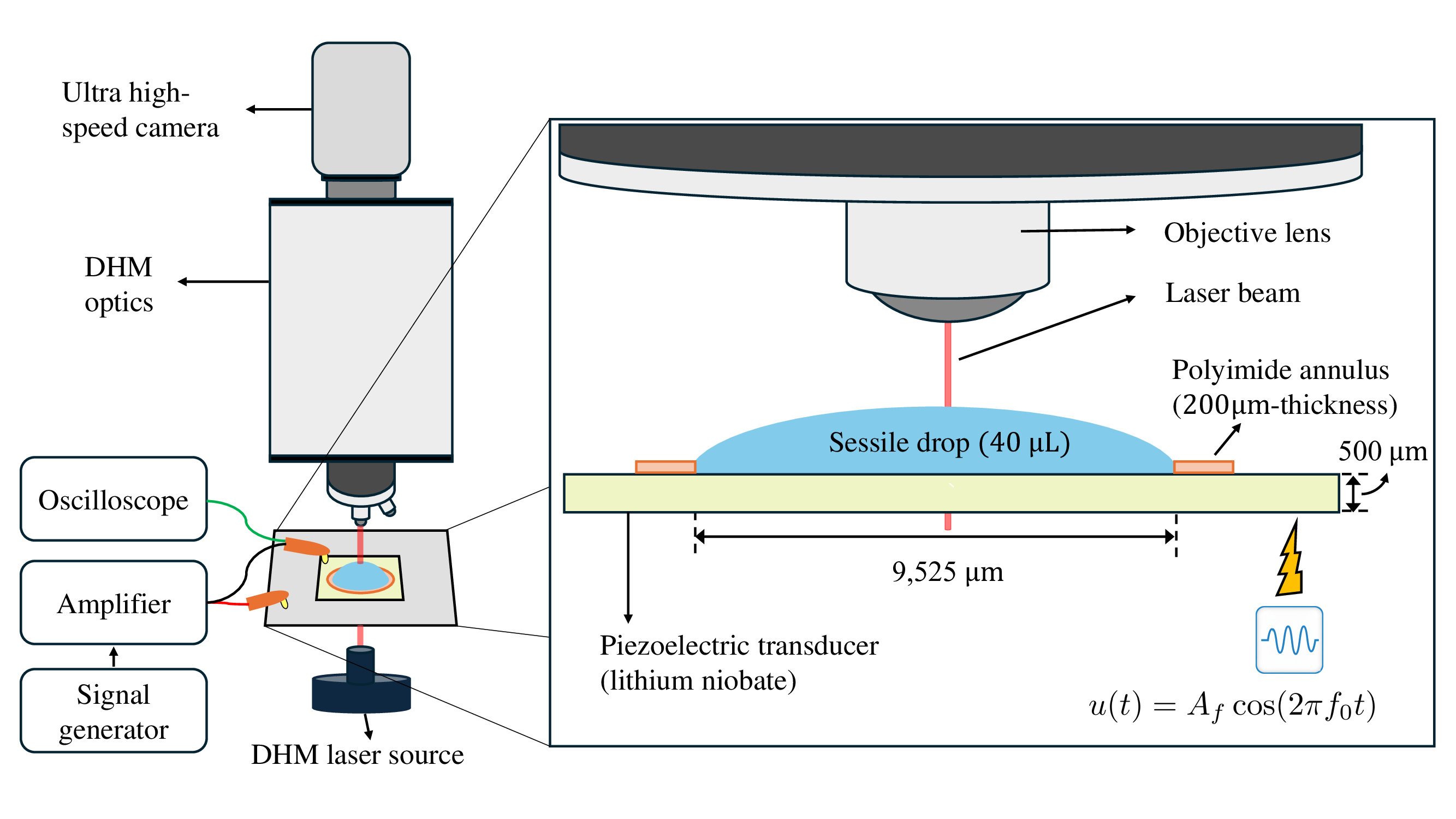}
    \caption{Digital holographic microscopy and schematic of microfluidic experimental setup. A sessile fluid droplet ($40 \, \upmu \textnormal{L}$) sits on a piezoelectric transducer, which is excited by an acoustic signal. The fluid surface displacement is measured holographically around the droplet apex area by the ultra-high-speed camera.}
    \label{fig:DHM_setup_droplet_shcematic}
\end{figure}
The transducer is excited by a sinusoidal acoustic signal at approximately $7 \times 10^6 \,\textnormal{Hz}$, driving its fundamental thickness-mode resonance~\cite{vasan2020fabrication}. The signal is generated and amplified before being applied to the substrate, inducing capillary wave turbulence at the microfluidic interface.
A high-intensity laser beam with a wavelength of $660 \, \textnormal{nm}$ is divided into reference and measurement paths, and their interference pattern is recorded by a camera. Variations in the optical path distance through the fluid cause phase shifts, which are reconstructed into a two-dimensional surface displacement map.
Phase unwrapping is applied when necessary to recover height variations exceeding one wavelength.
The DHM system records the resulting hologram images at a rate of $115{,}200$ frames per second (sampling frequency of 115.2 kHz) with a resolution of $10\,\textnormal{nm}$ in the surface displacement direction and on the order of $1.5\,\upmu\textnormal{m}$ along the lateral image plane.
The raw holograms are converted into spatially resolved wave-field measurement data, which are subsequently used as the state variables in the ROM framework.
For further details on the DHM technology and its experimental details, see~\cite{cuche1999simultaneous} and~\cite[Sect.~2]{orosco2023identification}, respectively.
The forcing amplitude~$A_f$ of the prescribed input signal $u(t)=A_f \cos(2 \pi f_0 t)$ is the only controllable experimental parameter. For each forcing amplitude at the fixed excitation frequency of $f_0 = 7 \times 10^6 \, \textnormal{Hz}$, the experiment is repeated $\textnormal{N}_{\textnormal{rep}}$ times under identical conditions. 
The treatment of $A_f$ in the ROM learning problem is discussed further in~\Cref{sss:mitigate_aliasing}.
For the varying values of forcing amplitude, we use the nondimensional acoustic capillary number, as
\begin{equation}
    \textnormal{Ca}_\textnormal{ac} = \frac{\mu \cdot V(A_f)}{\sigma},
\label{eq:acoustic_capillary_number}
\end{equation}
where $\mu$ is the dynamic viscosity\,[$\textnormal{Pa}\cdot\textnormal{s}$], $V(A_f)$ is the particle velocity at the substrate surface\,[$\mathrm{m}/\textnormal{s}$], and $\sigma$ is the surface tension\,[$\mathrm{N}/\textnormal{m}$].
The particle velocity $V(A_f)$ is the key quantity that depends on the varying amplitudes of the acoustic excitation signal. This is measured as a scalar value directly by the laser Doppler vibrometer~\cite{friend2010using, fukushima2009performance} (LDV, MSA-400 Microscope Vibrometer, Polytec, Waldbronn, Germany).

\subsection{Experimental data}
\label{ss:DHM_experimental_data}
The complete dataset, including all repetitions and acoustic capillary number cases, is publicly available through~\cite{capillary_wave_datasets}. 
All experiments are conducted using $40\,\upmu\textnormal{L}$ of deionized water, with fixed material properties: density $\rho=998 \, \textnormal{kg}/\textnormal{m}^3$ and surface tension $\sigma = 0.0725 \,\textnormal{N}/\textnormal{m}$.
A total of 16 repeated experiments are done for each of 10 operating conditions, which are defined by different values of nondimensional acoustic capillary number~\eqref{eq:acoustic_capillary_number} obtained by varying the excitation signal amplitude. 
The measurements capture the spatiotemporal evolution of the water surface height around the droplet apex area, see~\Cref{fig:DHM_setup_droplet_shcematic}. 
Here, the recorded DHM data represent the relative surface-height displacement from the resting height within the small region probed by the laser beam.
At low excitation powers, the surface-height variations are on the order of hundreds of micrometers, while substantially larger variations, reaching several millimeters, are observed as the forcing amplitude $A_f$ increases.
These variations are well above the displacement resolution of $10\,\textnormal{nm}$ of the DHM system (see~\Cref{ss:DHM_experimental_setup}), the instrument noise floor of approximately $2.85 \, \textnormal{nm}$, and noise associated with the thermal evaporation of the fluid with a rate of approximately $1 \, \textnormal{nm}$ per minute~\cite{yu2025fluid}. This indicates that the measured surface-height fluctuations are well resolved by the measurement system.
For each acoustic capillary number, the data are stored in HDF5 files containing the surface height data together with metadata describing the spatial and temporal discretization. 
Each dataset consists of $100{,}001$ snapshots spanning approximately $T_{f} = 0.868\,\textnormal{s}$, over a $300\,\upmu\textnormal{m} \times 300\,\upmu\textnormal{m}$ two-dimensional measurement interface, discretized into a $200 \times 200$ spatial grid.
Each snapshot is represented by a $200 \times 200$ matrix of real-valued surface height [$\upmu \textnormal{m}$].
The total size of the 16 repeated experimental datasets for each acoustic capillary number case is on the order of $200 \, \textnormal{GB}$, and the complete collection of 10 acoustic capillary number cases is approximately $1.9 \, \textnormal{TB}$.

%%%%%%%%%%%%%%%%%%%%%%%% Operator Inference %%%%%%%%%%%%%%%%%%%%%%%
\section{Learning stochastic reduced-order models via Operator Inference} \label{sec:stochastic_OpInf_ROM}
In this section, we introduce the stochastic ROM learning framework based on Operator Inference. \Cref{ss:dimensionality_reduction_for_stochastic_ROM} describes the dimensionality reduction used in the stochastic ROM learning setting, and \Cref{ss:stochastic_modeling_asumptions_and_choices} discusses the modeling assumptions used for stochastic ROMs of capillary wave turbulence. \Cref{ss:ROM_SDE_mean_covariance_dynamics_derivation} derives the mean and covariance dynamics of the reduced states, providing additional details and making explicit several intermediate steps from the formulation in~\cite{freitag2025learning}. \Cref{ss:SDE_ROMs_learning_for_capillary_wave_system} formulates the stochastic ROM learning problems for learning the drift and diffusion operators of the SDE ROMs. \Cref{ss:regularization_hyperparameters_selection_based_on_spectral_contents} incorporates Tikhonov regularization into the stochastic OpInf ROM learning problems and analyzes its physical significances, including its role in mitigating aliasing effects and modulating stochastic energy in the ROM.
Finally, \Cref{ss:r_selection_based_on_mean_cov_error_tradeoff} proposes a new ROM dimension selection strategy based on two statistical moments---the mean and covariance---that are critical to stochastic ROM learning.
%
%%%%%%%%
\subsection{Dimensionality reduction for stochastic reduced-order model learning}
\label{ss:dimensionality_reduction_for_stochastic_ROM}
Our goal is to model the low-dimensional representation of the stochastic dynamics of capillary wave turbulence using experimental measurements of the fluid surface displacement.
In this study, we use one-dimensional spatial profiles extracted from the DHM measurements, which are acquired on a $200 \times 200$ spatial grid, as outlined in~\Cref{sec:capillary_wave_turbulence_section}.
Because the measurement area is sufficiently small (see~\Cref{fig:DHM_setup_droplet_shcematic}), the 200 spatial profiles along either spatial direction exhibit nearly identical behavior in both the time and frequency domains. We therefore use the centerline of the two-dimensional measurement as a representative one-dimensional spatial profile. This retains the relevant spatial characteristics of the measured dynamics while substantially reducing the computational cost of the subsequent ROM simulations.
Each extracted profile has a spatial dimension of $n=200$.
We then build the state snapshot matrix from profiles collected across repeated experiments under an identical forcing input condition as
\begin{equation}
    \mathbb{X} \coloneqq [ \underbrace{\left[ \mathbf{X}(t_0,\omega_1), \ldots, \mathbf{X}(t_0,\omega_L) \right]}_{\mathbb{X}_0: L \ \textnormal{measurement samples at}\ t=t_0}, \ldots, \underbrace{\left[ \mathbf{X}(t_s,\omega_1), \ldots, \mathbf{X}(t_s,\omega_L) \right]}_{\mathbb{X}_s: L \ \textnormal{measurement samples at}\ t=t_s} ] \in \mathbb{R}^{n \times (s+1)L}.
\label{eq:high_fidelity_state_snapshot}
\end{equation}
Here, $\bX(t_k, \omega_j) \in \real^n$ denotes the measurement states---represented as surface displacements---at discrete time instances $t_k \in [0,T]$, $k=0,\ldots,s$, and for stochastic process observations~$\omega_j \in \Omega$, $j=1,\ldots,L$. Each block matrix~$\mathbb{X}_k$ represents an ensemble of $L$ measurements at the time instance~$t_k$.
The forcing input amplitude $A_f$ is omitted as an explicit input, since each snapshot matrix is constructed under a fixed forcing input condition. 
In a parametric learning setting, however, $A_f$ should be included as an input parameter.
The discrete time instances are determined by the camera sampling rate of $115.2 \, \textnormal{kHz}$, yielding a uniform time step of approximately $8.68 \times 10^{-6}$ seconds, for each experiment of duration $T_f = 0.868 \, \textnormal{s}$.
Nondimensionalization of the state snapshot matrix has been widely used in previous OpInf studies to improve numerical stability, particularly when the state vector consists of significantly different physical scales or orders of magnitude~\cite{Swischuk_2020, McQuarrie_2021}. In this study, however, the state consists of a single variable, wave surface height, whose values do not span multiple orders of magnitude. Therefore, no state scaling or nondimensionalization is applied.
To reduce the dimension of the data, we take the proper orthogonal decomposition (POD) approach, by which we obtain the linear basis of the given snapshots~\eqref{eq:high_fidelity_state_snapshot} via the reduced singular value decomposition (SVD). For $n \leq (s+1)L$, this gives
\begin{equation*}
    \mathbb{X} = \mathbf{\Phi} \mathbf{\Sigma} \mathbf{\Psi}^\top,
% \label{eq:SVD_of_state_snapshot_matrix}
\end{equation*}
where the columns of $\mathbf{\Phi} \in \real^{n \times n}$ and $\mathbf{\Psi} \in \real^{(s+1)L \times n}$ are the left and right singular vectors of~$\mathbb{X}$, respectively, and $\mathbf{\Sigma} \in \real^{n \times n}$ is a diagonal matrix with singular values~$\sigma_1 \geq \sigma_2 \geq \cdots \geq \sigma_n > 0$ on its diagonal.
The leading $r$ columns of the left singular vector matrix~$\mathbf{\Phi}$ form the rank-$r$ linear POD basis matrix $\mathbf{\Phi}_r = [\boldsymbol{\varphi}_1, \ldots, \boldsymbol{\varphi}_r] \in \mathbb{R}^{n \times r}$ whose columns are orthonormal, i.e., $\mathbf{\Phi}_r^\top {\mathbf{\Phi}}_r = \bI_r \in \real^{r \times r}$. The basis~$\mathbf{\Phi}_r$ defines a linear projection $\mathbf{\Phi}_r \mathbf{\Phi}_r^\top \in \mathbb{R}^{n \times n}$ onto the $r$-dimensional subspace spanned by~$\boldsymbol{\varphi}_1,\ldots,\boldsymbol{\varphi}_r$.
Using the POD basis, we obtain the reduced state snapshot matrix via projection of~\eqref{eq:high_fidelity_state_snapshot} as
\begin{equation}
    \widehat{\mathbb{X}} = \mathbf{\Phi}_r^\top \mathbb{X} = [ \underbrace{\left[ \widehat{\bX}(t_0,\omega_1), \ldots, \widehat{\bX}(t_0,\omega_L) \right]}_{\widehat{\mathbb{X}}_{0}: L \ \textnormal{reduced state samples at}\ t=t_0}, \ldots, \underbrace{\left[ \widehat{\bX}(t_s,\omega_1), \ldots, \widehat{\bX}(t_s,\omega_L) \right]}_{\widehat{\mathbb{X}}_{s}: L \ \textnormal{reduced state samples at}\ t=t_s} ] \in \mathbb{R}^{r \times (s+1)L},
\label{eq:reduced_state_snapshot}
\end{equation}
where $\widehat{\bX}(t_k, \omega_j) = \bPhi_r^\top \bX(t_k, \omega_j) \in \real^r$ denotes the reduced state at the  $k$th time step for the $j$th noise realization.
We learn an underlying ROM using the trajectories of the reduced states across all noise realizations.
This motivates the following modeling assumptions required for the identification of the reduced dynamics of the experimental capillary wave dynamics. 

%%%%%%%%%%%%%%%%%%%%%% Non-intrusive learning of low-fidelity model %%%%%%%%%%%%%%%%%%%%%%%%%%%%
\subsection{Assumptions and modeling choices}
\label{ss:stochastic_modeling_asumptions_and_choices}
\begin{assumption} 
The capillary wave dynamics admit a low-dimensional representation where the reduced states evolve through a stochastic differential equation.
\end{assumption}
\noindent We aim to learn bilinear SDE ROMs with additive noise:
\begin{equation}
    \mathrm{d} \widehat{\bX}_{t} = \left[ \widehat{\bA} \widehat{\bX}_{t} + \widehat{\bB} \bu(t) + \sum_{\ell=1}^{m} \widehat{\bN}_{\ell} \widehat{\bX}_{t} u_\ell(t) \right] \mathrm{d} t + \widehat{\bM} \mathrm{d} \bW_t,
\label{eq:bilinear_stochastic_ROM}
\end{equation}
where $\widehat{\bX}_{t} \in \real^r$ denotes the reduced state, $\widehat{\bX}_{0} \in \real^r$ is the initial condition, $\bW_t \in \real^{d}$ denotes the $d$-dimensional Wiener process with correlation matrix $\mathbf{K}\in\mathbb{R}^{d \times d}$, $t \in [0, T]$, and $\bu(t) \in \real^m$ is a deterministic integrable input. 
All stochastic processes are defined on the filtered probability space ($\Omega, \cF, \{ \cF_t \}_{t\in[0,T]}, \mathbb{P}$), where $\omega \in \Omega$ denotes a sample result. For stochastic processes, we follow the standard subscript notation for time,  see e.g.,~\cite{oksendal2003stochastic, kloeden1992numerical, jazwinski2007stochastic}. For brevity, we omit the $\omega$ dependence and write $\widehat{\bX}_t$ in place of $\widehat{\bX}_t(\omega)$ throughout the paper.
The ROM~\eqref{eq:bilinear_stochastic_ROM} consists of the drift term that includes the linear operator $\widehat{\bA} \in \mathbb{R}^{r \times r}$, the input matrix $\widehat{\bB} \in \mathbb{R}^{r \times m}$, and the bilinear operators $\widehat{\bN}_{\ell} \in \mathbb{R}^{r \times r}$.
It also consists of the diffusion term with the coefficient matrix $\widehat{\bM} \in \mathbb{R}^{r \times d}$ that scales the stochastic forcing contribution to the ROM dynamics. Note that setting $\widehat{\bM} = \mathbf{0}$ eliminates the stochastic term and recovers the deterministic version of Eq.~\eqref{eq:bilinear_stochastic_ROM}, where $\widehat{\bX}_{t}$ reduces to a deterministic state~$\widehat{\bX}(t)$.
\par
The choice of model~\eqref{eq:bilinear_stochastic_ROM} is motivated by the following considerations.
First, in the experimental setting, the fluid surface dynamics are influenced by measurement noise, intrinsic experimental variability even under controlled operating conditions, and unresolved physical phenomena such as thermal effects on the fluid surface, independent of the acoustic forcing signal. These sources of uncertainty motivate modeling the reduced-order dynamics as an SDE.
Second, the experimental capillary wave system considered herein is driven by an acoustic sinusoidal signal that excites the piezoelectric transducer, inducing vibrations throughout the fluid layer to a depth below~$725 \, \upmu \textnormal{m}$.
While this forcing acts throughout the fluid, the DHM measurements are restricted to the two-dimensional region where the high-intensity light passes through.
Specifically, the measurement domain of~$300 \, \upmu \textnormal{m} \times 300 \, \upmu \textnormal{m}$ is much smaller than the full annular contact region between the fluid and the substrate, which has a diameter of $9{,}525 \, \upmu \textnormal{m}$ (see~\Cref{fig:DHM_setup_droplet_shcematic}).
Thus, within the observed measurement domain, the acoustic forcing cannot be regarded solely as a boundary excitation. Instead, it directly influences the measured fluid surface dynamics. Assuming this interaction carries over the reduced space, we include bilinear terms $\sum_{\ell=1}^{m} \widehat{\bN}_{\ell} \widehat{\bX}_{t} u_\ell(t)$, allowing the ROM to capture input-state coupling.
This yields a bilinear form of the right-hand side in the ROM. 
\par
While higher-order polynomial ROMs---including quadratic~\cite{McQuarrie_2021, Swischuk_2020, qian2022reduced, farcas2023parametric, KIM2025114418, Zastrow2023, 0x003d183f, Rocha2023} and cubic~\cite{jain2021performance} formulations---have been successfully developed and demonstrated for deterministic systems, analogous higher-order formulations for stochastic ROMs remain theoretically non-existent.
In particular, extending the stochastic OpInf framework~\cite{freitag2025learning} to higher-order polynomial SDE ROMs is nontrivial because nonlinear state terms introduce higher-order statistical moments into the expectation dynamics.
The expectation dynamics underlying the current framework are discussed later in~\Cref{proof:ROM_SDE_mean_dynamics_derivation}.
Specifically, in general,
$
\mathbb{E}[\widehat{\mathbf{X}}_t\otimes\widehat{\mathbf{X}}_t]
\neq \mathbb{E}[\widehat{\mathbf{X}}_t] \otimes \mathbb{E}[\widehat{\mathbf{X}}_t]
$ (here, the operator $\otimes$ denotes the \textit{compact} Kronecker product). Consequently, the statistical evolution becomes significantly more complicated.
We therefore adopt a bilinear formulation as the ROM that captures the capillary wave dynamics most expressively.

\begin{assumption}
Each experimental trajectory is treated as an independent realization of the underlying stochastic dynamics. 
\end{assumption}
\noindent The stochastic OpInf framework requires multiple realizations of the same stochastic process to learn the ROM.
In the demonstration cases of~\cite{freitag2025learning}, these realizations are generated by repeatedly simulating a known stochastic high-fidelity model with different numerically-generated noise realizations.
In the experimental setting considered herein, however, the same approach is not possible.
Therefore, we treat repeated experimental measurements acquired under identical operating conditions as independent realizations of the same underlying stochastic dynamics.

\begin{assumption}
The fluid surface displacement data are approximately stationary over some time interval~$t_d$. 
\end{assumption}
\noindent This allows us to partition a single long experimental trajectory of duration $T_{f} = 0.868\,\textnormal{s}$ into multiple non-overlapping segments of length $t_d$, each of which is treated as a sample trajectory of the underlying stochastic process (see~\cite{kossaifi2026demystifying} for a similar approach). Although non-overlapping, adjacent segments may retain temporal dependence~\cite{Priestley1981,Oliver2014}.
To determine an approximate stationary time-window length~$t_d$, we partition the full time frame~$T_f$ into $N_s = \left \lfloor \frac{T_f}{t_d} \right \rfloor$ non-overlapping segments of duration~$t_d$, where $\lfloor \cdot \rfloor$ denotes the floor function. Then, the number of measurement samples is determined as
\begin{equation}
    L = N_s \times N_{\textnormal{rep}},
\label{eq:sample_size}
\end{equation}
where $N_{\textnormal{rep}}$ is the number of repeated experiments for the given input case.
To select $t_d$, we assess the physical and statistical consistency across the partitioned sample trajectories \textcolor{magenta}{through} their power spectral densities (PSDs), which provide a natural characterization of the spectral energy distribution in capillary wave turbulence~\cite{newell2011wave}.
\begin{figure}[!ht]
\centering
\includegraphics[width=1.0\linewidth]{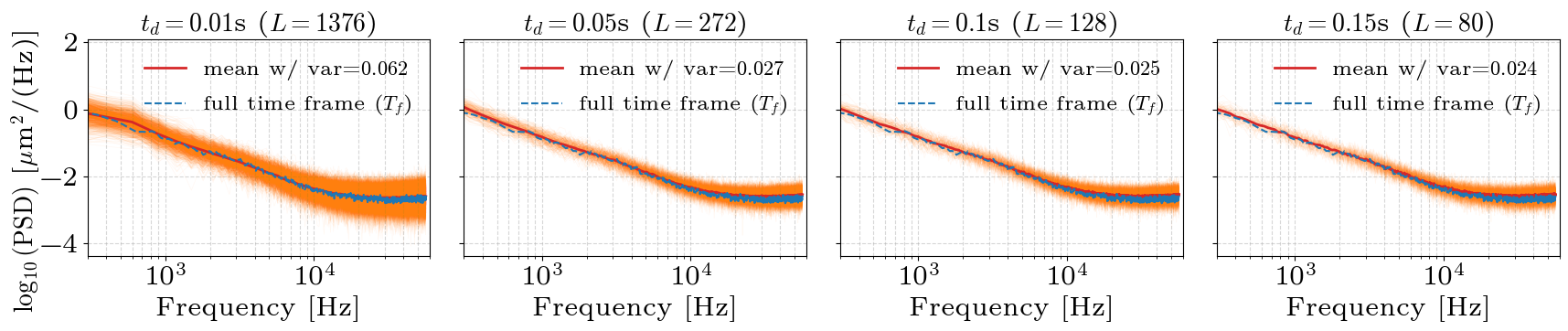}
\caption{Comparison of power spectral densities of sample trajectories when different partitioning time periods~$t_d$ are considered. The case of~$\textnormal{Ca}_\textnormal{ac} = 0.00284$ is shown here. Power spectral densities are estimated via Welch's method~\cite{Welch1967}, using the time series at the midpoint of the 2D spatial domain.}
\label{fig:PSD_decorrelation_time}
\end{figure}
\Cref{fig:PSD_decorrelation_time} shows the comparison of PSDs of all sample trajectories when we partition the full time frame using different time intervals of length $t_d$. Each figure shows~$L$ PSDs for the tested~$t_d$ and their mean. For reference, the PSD computed over the full time frame~$T_f$ is also shown.
For all partition lengths considered, the mean PSD computed across the $L$ PSD samples matches closely with the PSD computed from the full unpartitioned trajectory.
In contrast, the PSD variance across samples depends more strongly on the choice of $t_d$. Specifically, the variance changes marginally from $0.024$ to $0.025$ and $0.027$ as $t_d$ decreases from $0.15 \, \textnormal{s}$ to $0.10 \, \textnormal{s}$ and $0.05 \, \textnormal{s}$. However, it increases substantially to $0.062$ when the shortest partition length of $t_d=0.01 \, \textnormal{s}$ is considered, which shows the expected tradeoff between segment length and spectral variability in Welch-type methods~\cite{Welch1967}.
We select $t_d=0.05 \, \textnormal{s}$ as it achieves a favorable balance between spectral consistency and the number of independent trajectories we obtain, without introducing a substantial spectral variability.
With this strategy, \Cref{tab:experimental_datasets_capillary_wave} summarizes the experimental datasets used for each $\textnormal{Ca}_\textnormal{ac}$ case, from which the stochastic ROM is learned using $L$ measurement samples.
Among all the repeated datasets described in~\Cref{ss:DHM_experimental_data}, we only use datasets that exhibit consistent spectral behaviors---i.e., verified by checking PSDs---for the stochastic ROM modeling.
\begin{table}[t]
    \centering
    \caption{Experimental datasets of deionized water surface displacements used for the stochastic ROM.}
    \vspace{-0.5em}
    
    \begin{tabular}{l|rrrrrrrrrr}
        $\textnormal{Acoustic capillary number}$ $(\times 10^{-3})$ & 0.25 & 0.66 & 0.67 & 0.83 & 1.15 & 1.56 & 1.76 & 1.99 & 2.84 & 3.32 \\
        \hline
        $\textnormal{Number of repeated experiments}$ & 16 & 15 & 14 & 15 & 16 & 15 & 16 & 15 & 16 & 15 \\
        $\textnormal{Number of measurement samples}$ & 272 & 255 & 238 & 255 & 272 & 255 & 272 & 255 & 272 & 255 \\
    \end{tabular}
\label{tab:experimental_datasets_capillary_wave}
\end{table}

%%%%%%%%%%%%%%%%%%%%%% ROM SDE and its mean/covariance dynamics %%%%%%%%%%%%%%%%%%%%%%%%%
\subsection{Mean and covariance dynamics of the reduced states}
\label{ss:ROM_SDE_mean_covariance_dynamics_derivation}
We treat the DHM measurements of the spatiotemporal fluid surface displacement as high-fidelity state trajectories.  
Then, we formulate the stochastic OpInf ROM framework directly in the reduced space; we omit the derivation in the high-dimensional space presented in the original work.
Since the bilinear stochastic ROM~\eqref{eq:bilinear_stochastic_ROM} is linear in the reduced state for a prescribed input and driven by the additive Wiener process~$\bW_t$, its solution is a Gaussian process
$$
\widehat{\bX}_t \sim \cN(\widehat{\bE}(t), \widehat{\bC}(t)),
$$
characterized by the mean~$\widehat{\bE}(t) = \mathbb{E}\left[ \widehat{\bX}_{t} \right] \in \real^r$ and covariance~$\widehat{\bC}(t) = \Cov\left( \widehat{\bX}_{t}, \widehat{\bX}_{t} \right) = \mathbb{E}\left[ \widehat{\bX}^c_{t} (\widehat{\bX}^{\,c}_{t})^\top \right] \in \real^{r \times r}$, where $\widehat{\bX}^c_{t} \coloneqq \widehat{\bX}_{t} - \widehat{\bE}(t) \in \real^{r}$ is the centered process (\cite[Chapter.~4.2]{kloeden1992numerical}.
The proof for the following proposition is based on that of~\cite[Sect.~2]{freitag2025learning}; we include it here for completeness.

%%%%% Proposition %%%%%%
\begin{proposition}[Mean dynamics]
    The mean of the reduced states of Eq.~\eqref{eq:bilinear_stochastic_ROM}, $\widehat{\bE}(t) \in \real^r$, admits the following evolution equation:
    \begin{align}
        \dot{\widehat{\bE}}(t) =
        \widehat{\bA} \widehat{\bE}(t) + \widehat{\bB} \bu(t) + \sum_{\ell=1}^{m} \widehat{\bN}_{\ell} \widehat{\bE}(t) u_\ell(t).
    \label{eq:ROM_expectation_dynamics}
    \end{align}
\end{proposition}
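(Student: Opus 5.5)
To prove the mean dynamics, I will write the SDE \eqref{eq:bilinear_stochastic_ROM} in integral form and take expectations. For a fixed deterministic input $\bu(t)$, the drift is affine in the state. Denote it by $f(t,\widehat{\bX}_t) = \widehat{\bA}(t)\widehat{\bX}_t + \widehat{\bB}\bu(t)$, where $\widehat{\bA}(t) := \widehat{\bA} + \sum_{\ell=1}^m \widehat{\bN}_\ell u_\ell(t)$. Then
\begin{equation*}
\widehat{\bX}_t = \widehat{\bX}_0 + \int_0^t \left[ \widehat{\bA}(\tau)\widehat{\bX}_\tau + \widehat{\bB}\bu(\tau) \right] \rd \tau + \widehat{\bM}\,\bW_t ,
\end{equation*}
where I have used $\int_0^t \widehat{\bM}\, \rd \bW_\tau = \widehat{\bM}(\bW_t - \bW_0)$ with $\bW_0 = 0$, since the diffusion is additive.

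Next, I take expectations term by term. The stochastic term has zero mean, because $\mathbb{E}[\bW_t] = 0$ for a (correlated) Wiener process; more generally, an It\^o integral of a deterministic square-integrable integrand has mean zero. For the drift integral, I interchange expectation and time integration by Fubini's theorem. This requires $\int_0^t \mathbb{E}\|\widehat{\bX}_\tau\|\,\rd\tau < \infty$, which follows from the standard existence and uniqueness theory for linear SDEs with integrable coefficients: the solution has bounded second moments on $[0,T]$, e.g.\ \cite{kloeden1992numerical, oksendal2003stochastic}. Linearity of expectation and determinism of $\widehat{\bA}(\tau)$ and $\bu$ then give
\begin{equation*}
\widehat{\bE}(t) = \widehat{\bE}(0) + \int_0^t \left[ \widehat{\bA}(\tau)\widehat{\bE}(\tau) + \widehat{\bB}\bu(\tau) \right]\rd\tau .
\end{equation*}

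Finally, I differentiate in $t$. If $\bu$ is continuous, the integrand is continuous (since $\widehat{\bE}$ is continuous, by the continuity of the integral representation), and the fundamental theorem of calculus yields \eqref{eq:ROM_expectation_dynamics} after expanding $\widehat{\bA}(t)$. For merely integrable $\bu$, the same identity holds for almost every $t$, in the Carath\'eodory sense.

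The only genuinely delicate step is justifying the exchange of expectation and integral, together with the vanishing of the noise mean. Both rest on the moment bounds for the solution, which I will cite rather than reprove. The key structural point is that the bilinear terms are linear in the state once $\bu$ is fixed, so no higher moments appear. This is exactly why the argument fails for quadratic ROMs, as remarked earlier in the paper.
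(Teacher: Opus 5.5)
Your proposal is correct and follows essentially the same route as the paper's proof: write the SDE in integral form, take expectations using the zero-mean property of the It\^o integral, and pass back to differential form. Your extra justifications are details the paper leaves implicit: exchanging expectation and time integral via moment bounds, continuity of $\widehat{\bE}$ for the fundamental theorem of calculus, and the almost-everywhere statement for merely integrable $\bu$.
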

\par
\begin{proof}
\label{proof:ROM_SDE_mean_dynamics_derivation}
Formulating the SDE ROM~\eqref{eq:bilinear_stochastic_ROM} in integral form yields
\begin{equation}
    \widehat{\bX}_{t} - \widehat{\bX}_{0} = \int_{0}^{t} \left[ \widehat{\bA} \widehat{\bX}_{s} + \widehat{\bB} \bu(s) + \sum_{\ell=1}^{m} \widehat{\bN}_{\ell} \widehat{\bX}_{s} u_\ell(s) \right] \mathrm{d}s + \int_0^t \widehat{\bM} \mathrm{d} \bW_s,
\label{eq:integral_form_of_ROM_SDE}
\end{equation}
where $\int_0^t \widehat{\bM} \mathrm{d} \bW(s)$ denotes the It\^o integral~\cite{ito1944109}.
Taking expectations on both sides of Eq.~\eqref{eq:integral_form_of_ROM_SDE} and applying the zero-mean property of the It\^o integral, the mean trajectory $\widehat{\bE}(t) \in \real^r$ of the reduced states satisfies
\begin{equation*}
    \widehat{\bE}(t) = \widehat{\bE}(0) + \int_0^t \left[ \widehat{\bA} \widehat{\bE}(s) + \widehat{\bB} \bu(s) + \sum_{\ell=1}^{m} \widehat{\bN}_{\ell} \widehat{\bE}(s) u_\ell(s) \right] \mathrm{d}s,
\end{equation*}
which in differential form reads as
\begin{equation*}
    \dot{\widehat{\bE}}(t) = \widehat{\bA} \widehat{\bE}(t) + \widehat{\bB} \bu(t) + \sum_{\ell=1}^{m} \widehat{\bN}_{\ell} \widehat{\bE}(t) u_\ell(t).
% \label{eq:ROM_expectation_dynamics}
\end{equation*}
\end{proof}
\par
\noindent The covariance dynamics are also derived in~\cite[Sect.~2]{freitag2025learning}. Here, we present a more detailed derivation, making explicit several intermediate steps that are omitted in the original work.
%
%
%%%%% Proposition %%%%%%
\begin{proposition}[Covariance dynamics]
    The covariance matrix of the reduced states of Eq.~\eqref{eq:bilinear_stochastic_ROM}, $\widehat{\bC}(t) \in \real^{r \times r}$, admits the following evolution equation:
    \begin{align}
        \dot{\widehat{\bC}}(t) =
        \widehat{\bP}(t) \widehat{\bC}(t) + \widehat{\bC}(t) {\widehat{\bP}(t)}^\top + \widehat{\bM} \bK \widehat{\bM}^\top,
    \label{eq:ROM_covariance_dynamics}        
    \end{align}
    where $\widehat{\bP}(t) \coloneqq \widehat{\bA} + \sum_{\ell=1}^{m} \widehat{\bN}_{\ell} u_\ell(t) $ and $\widehat{\bC}(0) = \Cov(\widehat{\bX}_{0}, \widehat{\bX}_{0}) \coloneqq \mathbb{E}[\widehat{\bX}_{0} \widehat{\bX}_{0}^\top] - \mathbb{E}[\widehat{\bX}_{0}]\mathbb{E}[\widehat{\bX}_{0}]^\top$.
\end{proposition}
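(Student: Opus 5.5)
The plan is to work with the centered process $\widehat{\bX}^c_t = \widehat{\bX}_t - \widehat{\bE}(t)$ and apply It\^o's product rule to $\widehat{\bX}^c_t (\widehat{\bX}^c_t)^\top$, then take expectations.

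\textbf{Step 1: SDE for the centered process.} Subtracting the mean dynamics \eqref{eq:ROM_expectation_dynamics} from the SDE ROM \eqref{eq:bilinear_stochastic_ROM} cancels the input term $\widehat{\bB}\bu(t)$. Because the drift is affine in the state for a prescribed input, the remaining terms combine into
\begin{equation*}
\mathrm{d}\widehat{\bX}^c_t = \widehat{\bP}(t)\widehat{\bX}^c_t\,\mathrm{d}t + \widehat{\bM}\,\mathrm{d}\bW_t .
\end{equation*}
This is an SDE with additive noise, time-dependent linear drift, and zero mean.

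\textbf{Step 2: It\^o product rule.} Apply It\^o's formula to the matrix-valued process $\widehat{\bX}^c_t(\widehat{\bX}^c_t)^\top$, entry by entry:
\begin{equation*}
\mathrm{d}\big(\widehat{\bX}^c_t(\widehat{\bX}^c_t)^\top\big) = \mathrm{d}\widehat{\bX}^c_t\,(\widehat{\bX}^c_t)^\top + \widehat{\bX}^c_t\,(\mathrm{d}\widehat{\bX}^c_t)^\top + \mathrm{d}\widehat{\bX}^c_t\,(\mathrm{d}\widehat{\bX}^c_t)^\top .
\end{equation*}
Substituting the centered SDE gives three groups of terms:
\begin{itemize}
\item drift contributions $\widehat{\bP}(t)\widehat{\bX}^c_t(\widehat{\bX}^c_t)^\top\mathrm{d}t + \widehat{\bX}^c_t(\widehat{\bX}^c_t)^\top\widehat{\bP}(t)^\top\mathrm{d}t$;
\item martingale terms $\widehat{\bM}\,\mathrm{d}\bW_t(\widehat{\bX}^c_t)^\top + \widehat{\bX}^c_t\,\mathrm{d}\bW_t^\top\widehat{\bM}^\top$;
\item the quadratic covariation term.
\end{itemize}
Using $\mathrm{d}\bW_t\,\mathrm{d}\bW_t^\top = \bK\,\mathrm{d}t$ (correlated Wiener process) and $\mathrm{d}t\,\mathrm{d}t = \mathrm{d}t\,\mathrm{d}\bW_t = 0$, the covariation term reduces to $\widehat{\bM}\bK\widehat{\bM}^\top\mathrm{d}t$.

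\textbf{Step 3: Take expectations.} Write the result in integral form on $[0,t]$ and take expectations. The It\^o integrals $\int_0^t \widehat{\bM}\,\mathrm{d}\bW_s(\widehat{\bX}^c_s)^\top$ and their transposes have zero mean, because $\widehat{\bX}^c_s$ is adapted and square integrable. Square integrability holds since $\widehat{\bX}_t$ is Gaussian with finite second moments, and the input $\bu$ is integrable and deterministic. Exchanging expectation with the Lebesgue integral (Fubini) then gives
\begin{equation*}
\widehat{\bC}(t) = \widehat{\bC}(0) + \int_0^t \big[\widehat{\bP}(s)\widehat{\bC}(s) + \widehat{\bC}(s)\widehat{\bP}(s)^\top + \widehat{\bM}\bK\widehat{\bM}^\top\big]\,\mathrm{d}s .
\end{equation*}
Differentiating yields \eqref{eq:ROM_covariance_dynamics}, and the initial condition is $\widehat{\bC}(0)$ as stated.

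\textbf{Main obstacle.} The only delicate step is justifying that the stochastic-integral terms have zero expectation, which requires $\mathbb{E}\int_0^t \|\widehat{\bX}^c_s\|^2\,\mathrm{d}s<\infty$. I would obtain this from the explicit variation-of-constants solution through the state-transition matrix of $\widehat{\bP}(t)$, assuming $\bu$ is, for instance, locally bounded. If $\bu$ is only integrable, the transition matrix still exists and the bound still holds.
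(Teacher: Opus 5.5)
Your proposal is correct and follows essentially the same route as the paper: derive the centered SDE $\mathrm{d}\widehat{\bX}^c_t = \widehat{\bP}(t)\widehat{\bX}^c_t\,\mathrm{d}t + \widehat{\bM}\,\mathrm{d}\bW_t$, apply It\^o's product rule with $\mathrm{d}\bW_t\,\mathrm{d}\bW_t^\top = \bK\,\mathrm{d}t$, and take expectations so that the martingale terms vanish. Your extra steps make explicit what the paper only asserts, namely the passage through the integral form and the square-integrability check that the It\^o integrals have zero mean; for that check, rely on the variation-of-constants bound, which needs only $\mathbb{E}\|\widehat{\bX}_0\|^2<\infty$, rather than on Gaussianity of $\widehat{\bX}_t$, since the data-derived initial states need not be Gaussian.
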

\par
\begin{proof}
\label{proof:ROM_SDE_covariance_dynamics_derivation}
We start by writing the SDE for the centered process~$\mathrm{d} \widehat{\bX}^c_{t}= \mathrm{d} \left( \widehat{\bX}_{t} - \widehat{\bE}(t) \right)$ using Eq.~\eqref{eq:ROM_expectation_dynamics} as:
\begin{equation}
\begin{split}
    \mathrm{d} \widehat{\bX}^c_{t}
    &= \left[ \widehat{\bA} \widehat{\bX}_{t} + \widehat{\bB} \bu(t) + \sum_{\ell=1}^{m} \widehat{\bN}_{\ell} \widehat{\bX}_{t} u_\ell(t) \right] \mathrm{d}t + \widehat{\bM} \mathrm{d} \bW_t - \left[ \widehat{\bA} \bE_r(t) + \widehat{\bB} \bu(t) + \sum_{\ell=1}^{m} \widehat{\bN}_{\ell} \bE_r(t) u_\ell(t) \right] \mathrm{d}t \\
    &= \left[ \widehat{\bA} \widehat{\bX}^c_{t} + \sum_{\ell=1}^{m} \widehat{\bN}_{\ell} \widehat{\bX}^c_{t} \bu(t)  \right] \mathrm{d}t + \widehat{\bM} \mathrm{d} \bW_t.
\end{split}
\label{eq:centered_process_ROM_SDE}
\end{equation}
Applying It\^o's product rule, we expand
\begin{equation}
    \rd \widehat{\bX}^c_{t} \, (\widehat{\bX}^{c}_{t})^\top = \underbrace{\rd \widehat{\bX}^c_{t} \, (\widehat{\bX}^{c}_{t})^\top}_{(\text{i})} + \underbrace{\widehat{\bX}^c_{t} \, \rd(\widehat{\bX}^{c}_{t})^\top}_{(\text{ii})} + \underbrace{\rd \widehat{\bX}^c_{t} \,  \rd(\widehat{\bX}^{c}_{t})^\top}_{(\text{iii})},
\label{eq:Ito_product_rule_ROM}
\end{equation}
 where the last term on the right-hand side is the It\^o correction arising from the quadratic variation $\rd \bW_t \, \rd \bW_t^\top = \bK \rd t$.
Using Eq.~\eqref{eq:centered_process_ROM_SDE}, each term in Eq.~\eqref{eq:Ito_product_rule_ROM} can be expanded as follows:
 \begin{align*}
\text{(i)}: \quad \rd \widehat{\bX}^c_{t} \, (\widehat{\bX}^{\,c}_{t})^\top 
    &= \left[\left[ \widehat{\bA} \widehat{\bX}^c_{t} + \sum_{\ell=1}^{m} \widehat{\bN}_{\ell} \widehat{\bX}^c_{t} u_\ell(t)\right] \, \rd t + \widehat{\bM} \rd \bW_t\right] (\widehat{\bX}^{c}_{t})^\top \\
    &= \left[ \widehat{\bA} \widehat{\bX}^c_{t} (\widehat{\bX}^{c}_{t})^\top + \sum_{\ell=1}^{m} \widehat{\bN}_{\ell} \widehat{\bX}^c_{t} (\widehat{\bX}^{c}_{t})^\top u_\ell(t) \right] \rd t + \widehat{\bM} \, \rd \bW_t \, (\widehat{\bX}^{c}_{t})^\top, \\
\text{(ii)}: \quad \widehat{\bX}^c_{t} \, \rd (\widehat{\bX}^{\,c}_{t})^\top 
    &= \widehat{\bX}^c_{t} \left[\left[ \widehat{\bA} \widehat{\bX}^c_{t} + \sum_{\ell=1}^{m} \widehat{\bN}_{\ell} \widehat{\bX}^c_{t} u_\ell(t) \right ] \rd t + \widehat{\bM} \rd \bW_t \right]^\top \\
    &= \left[ \widehat{\bX}^c_{t} (\widehat{\bX}^{c}_{t})^\top \widehat{\bA}^\top + \sum_{\ell=1}^{m} \widehat{\bX}^c_{t} (\widehat{\bX}^{c}_{t})^\top \widehat{\bN}_{\ell}^\top u_\ell(t) \right ] \rd t + \widehat{\bX}^c_{t} \, \rd \bW_t^\top \, \widehat{\bM}^\top, \\
\text{(iii)}: \quad \rd \widehat{\bX}^c_{t} \, \rd (\widehat{\bX}^{\,c}_{t})^\top 
    &= \left[\left[ \widehat{\bA} \widehat{\bX}^c_{t} + \sum_{\ell=1}^{m} \widehat{\bN}_{\ell} \widehat{\bX}^c_{t} u_\ell(t)\right ] \rd t + \widehat{\bM} \, \rd \bW_t \right] \\
    &\quad \times
       \left[\left[ \widehat{\bA} \widehat{\bX}^c_{t} + \sum_{\ell=1}^{m} \widehat{\bN}_{\ell} \widehat{\bX}^c_{t} u_\ell(t)\right] \rd t + \widehat{\bM} \, \rd \bW_t \right]^\top \\
    &= \widehat{\bM} \, \rd \bW_t \, \rd \bW_t^\top \, \widehat{\bM}^\top \\
    &= \widehat{\bM} \bK \widehat{\bM}^\top \rd t.
\end{align*}
Note, It\^o's rules $(\rd t)^2=0$ and $\rd \bW_t \cdot \rd t = 0$ are used during the expansion of term (iii).
As a result, Eq.~\eqref{eq:Ito_product_rule_ROM} can be rewritten as:
\begin{equation}
\begin{split}
    \rd \widehat{\bX}^c_{t} (\widehat{\bX}^{c}_{t})^\top
    &= \left[ \widehat{\bA} \widehat{\bX}^c_{t} (\widehat{\bX}^{c}_{t})^\top + \sum_{\ell=1}^{m} \widehat{\bN}_{\ell} \widehat{\bX}^c_{t} (\widehat{\bX}^{c}_{t})^\top u_\ell(t) + \widehat{\bX}^c_{t} (\widehat{\bX}^{c}_{t})^\top \widehat{\bA}^\top + \sum_{\ell=1}^{m} \widehat{\bX}^c_{t} (\widehat{\bX}^{c}_{t})^\top \widehat{\bN}_{\ell}^\top u_\ell(t) \right] \rd t      \\
    &\quad + \widehat{\bM} \, \rd \bW_t \, (\widehat{\bX}^{c}_{t})^\top + \widehat{\bX}^c_{t} \, \rd \bW_t^\top \, \widehat{\bM}^\top + \widehat{\bM} \bK \widehat{\bM}^\top \rd t.
\end{split}
\label{eq:centered_process_ROM_SDE_expanded}
\end{equation}
Taking the expectation to~Eq.~\eqref{eq:centered_process_ROM_SDE_expanded}, which applies to terms that have stochastic processes, we obtain the SDE for~$\widehat{\bC}(t) = \mathbb{E}[\widehat{\bX}^c_{t} (\widehat{\bX}^{c}_{t})^\top] \in \real^{r \times r}$ as
\begin{equation*}
\begin{split}
    \rd \widehat{\bC}(t) 
    &\coloneqq \rd \left( \mathbb{E} \left[ \widehat{\bX}^c_{t} (\widehat{\bX}^{c}_{t})^\top \right] \right)  \\
    &= \mathbb{E} \left[ \widehat{\bA} \widehat{\bX}^c_{t} (\widehat{\bX}^{c}_{t})^\top \rd t + \widehat{\bX}^c_{t} (\widehat{\bX}^{c}_{t})^\top \widehat{\bA}^\top \rd t + \sum_{\ell=1}^{m} \left( \widehat{\bN}_{\ell} \widehat{\bX}^c_{t} (\widehat{\bX}^{c}_{t})^\top u_\ell(t) + \widehat{\bX}^c_{t} (\widehat{\bX}^{c}_{t})^\top \widehat{\bN}_{\ell}^\top u_\ell(t) \right) \rd t \right]  \\ 
    &\quad + \mathbb{E} \left[ \widehat{\bM} \, \rd \bW_t \, (\widehat{\bX}^{c}_{t})^\top \right] + \mathbb{E} \left[ \widehat{\bX}^c_{t} \, \rd \bW_t^\top \, \widehat{\bM}^\top \right] + \widehat{\bM} \bK \widehat{\bM}^\top \rd t   \\
    &= \left[ \widehat{\bA} \widehat{\bC}(t) + \widehat{\bC}(t) \widehat{\bA}^\top + \sum_{\ell=1}^{m} \left( \widehat{\bN}_{\ell} u_\ell(t) \widehat{\bC}(t) + \widehat{\bC}(t) \widehat{\bN}_{\ell}^\top u_\ell(t) \right) \right] \rd t + \widehat{\bM} \bK \widehat{\bM}^\top \rd t.
\end{split}
% \label{eq:ROM_covariance_derivation}
\end{equation*}
This reads in differential form as a matrix differential equation:
\begin{equation*}
    \dot{\widehat{\bC}}(t) = \widehat{\bP}(t) \widehat{\bC}(t) + \widehat{\bC}(t) {\widehat{\bP}(t)}^\top + \widehat{\bM} \bK \widehat{\bM}^\top,
% \label{eq:ROM_covariance_dynamics}
\end{equation*}
where $\widehat{\bP}(t) \coloneqq \widehat{\bA} + \sum_{\ell=1}^{m} \widehat{\bN}_{\ell} u_\ell(t) $ and $\widehat{\bC}(0) = \Cov(\widehat{\bX}_{0}, \widehat{\bX}_{0}) \coloneqq \mathbb{E}[\widehat{\bX}_{0} \widehat{\bX}_{0}^\top] - \mathbb{E}[\widehat{\bX}_{0}]\mathbb{E}[\widehat{\bX}_{0}]^\top.$
\end{proof}
\par
\noindent Notably, although the reduced states $\widehat{\bX}_t$ evolve according to a SDE, the mean dynamics~\eqref{eq:ROM_expectation_dynamics} are deterministic; the driving Wiener process $\bW_t$ does not appear.
The covariance dynamics~\eqref{eq:ROM_covariance_dynamics} are likewise deterministic but depend explicitly on the diffusion coefficient matrix $\widehat{\bM}$ through the It\^o correction term~$\widehat{\bM} \bK \widehat{\bM}^\top$.
% From a physical perspective, this term encodes the intensity of the noise injected into the stochastic system.

%%%%%%%%%%%%%%%%%%%%%% Stochastic OpInf %%%%%%%%%%%%%%%%%%%%%%%%%%%%
\subsection{Stochastic differential equation reduced-order model learning problems}
\label{ss:SDE_ROMs_learning_for_capillary_wave_system}

%%%%%%%%%%%%%%%%%%%%%% Drift OpInf problem %%%%%%%%%%%%%%%%%%%%%%%%%%%%
\subsubsection{Learning the drift operators}
\label{sss:drift_opinf_problem}
The operators in the drift term of the ROM SDE~\eqref{eq:bilinear_stochastic_ROM} fully govern the mean dynamics of the reduced states in Eq.~\eqref{eq:ROM_expectation_dynamics}.
In practice, the mean of the reduced states is generally not directly observable. Thus, it must be estimated from the finite number of samples of reduced states.
Given $L$ noise samples at each discrete time instance~$t_k$, $k=0,\ldots,s$, the sample mean of the reduced states, which approximates the expected value of the reduced state trajectories $\mathbb{E}[\bPhi_r^\top \bX(t_k)]$, is computed as
\begin{equation}
    \widehat{\bE}_{k}^{L} = \frac{1}{L} \sum_{j=1}^{L} \widehat{\bX}(t_k, \omega_j) \in \real^r,
\label{eq:empirical_mean_reduced_state}
\end{equation}
which is used as an empirical estimate of the mean state trajectory governed by Eq.~\eqref{eq:ROM_expectation_dynamics}.
Our first goal is to learn the unknown drift operators in Eq.~\eqref{eq:bilinear_stochastic_ROM} using the empirical mean of the reduced states. Following~\cite{freitag2025learning}, we refer to this learning problem as the \textit{drift OpInf problem}, which is solved via the least-squares problem
\begin{equation}
    \min_{\widehat{\bA}, \widehat{\bB}, \widehat{\bN}_{1}, \ldots, \widehat{\bN}_{m}} \sum_{k=0}^{s} \left\| \widehat{\bA} \widehat{\bE}_{k}^{L} + \widehat{\bB} \bu_k + \sum_{\ell=1}^{m} \widehat{\bN}_{\ell} \widehat{\bE}_{k}^{L} u_{\ell, k} - \dot{\widehat{\bE}}^{\raisebox{-0.8ex}{\scriptsize$L$}}_k  \right\|_2^2,
\label{eq:drift_OpInf_optimization}
\end{equation}
where $\bu_k \coloneqq \bu(t_k) \in \real^m$ and $u_{\ell,k} \coloneqq u_\ell(t_k) \in \real$.
Equation~\eqref{eq:drift_OpInf_optimization} can be written compactly as
\begin{equation}
    \min_{\bO} \left\| \bR^\top - \bD^\top \bO^\top \right\|_F^2,
\label{eq:drift_OpInf_compact_form}
\end{equation}
where the target operators~$\bO = [ \widehat{\bA} \quad \widehat{\bB} \quad \widehat{\bN}_{1}, \ldots, \widehat{\bN}_{m} ] \in \real^{r \times (r + m + mr)}$.
$\bR = [ \dot{\widehat{\bE}}^{\raisebox{-0.8ex}{\scriptsize$L$}}_0, \ldots, \dot{\widehat{\bE}}^{\raisebox{-0.8ex}{\scriptsize$L$}}_s ] \in \real^{r \times (s+1)}$, where $\dot{\widehat{\bE}}^{\raisebox{-0.8ex}{\scriptsize$L$}}_k$ denotes the time derivative of~$\widehat{\bE}_{k}^L$, estimated using a suitable finite difference method. In this work, the 6th-order central finite differencing is used. 
The data matrix~$\bD$ is constructed as~$\left[ (\widehat{\bE}^L)^\top \quad {\bU}^\top \quad (\bU \odot \widehat{\bE}^L)^\top \right]^\top \in \real^{(r + m + mr) \times (s+1)}$, where $\widehat{\bE}^L = \left[ \widehat{\bE}_{0}^{L}, \ldots, \widehat{\bE}_{s}^{L} \right] \in \real^{r \times (s+1)}$, $\bU = \left[ \bu_0, \ldots, \bu_s \right] \in \real^{m \times (s+1)}$, and $\bU \odot \widehat{\bE}^L = \left[ \bu_0 \otimes \widehat{\bE}_{0}^{L}, \ldots, \bu_s \otimes \widehat{\bE}_{s}^{L} \right] \in \real^{mr \times (s+1)}$.
Here, a \textit{compact} Kronecker product $\otimes$ calculates the element-wise multiplication of two vectors while removing redundant quadratic terms. For example, when $\ba = [a_1, a_2, a_3]$, the compact Kronecker product is computed as $\ba \otimes \ba = [a_1^2, a_1a_2, a_1a_3, a_2^2, a_2a_3, a_3^2]$. 
Solving the drift OpInf problem~\eqref{eq:drift_OpInf_compact_form} identifies the operators~$\widehat{\bA}$, $\widehat{\bB}$, and $\widehat{\bN}_{1}, \ldots, \widehat{\bN}_{m}$, which describe the deterministic component of our ROM dynamics.

\subsubsection{Learning the diffusion operator}    \label{sss:diffusion_opinf_problem}
The next learning problem identifies the diffusion coefficient matrix~$\widehat{\bM}$, which modulates the intensity of the stochastic forcing term in the ROM~\eqref{eq:bilinear_stochastic_ROM}.
However, the covariance of the reduced states is also not directly observable. Thus, it must be estimated from the reduced state samples, analogous to the fact that the sample mean is estimated from the finite number of available samples in Eq.~\eqref{eq:empirical_mean_reduced_state}.
Using all the samples of the reduced states in~\eqref{eq:reduced_state_snapshot}, the covariance of the reduced states is approximated~by
\begin{equation}
    \widehat{\bC}_{k}^{L} = \frac{1}{L-1} \sum_{j=1}^{L} \left[ \widehat{\bX}(t_k, \omega_j) - \widehat{\bE}_{k}^{L} \right] \left[ \widehat{\bX}(t_k, \omega_j) - \widehat{\bE}_{k}^{L} \right]^\top \in \real^{r \times r},
\label{eq:empirical_covariance_reduced_state}
\end{equation}
which provides an empirical estimate of the covariance trajectory governed by Eq.~\eqref{eq:ROM_covariance_dynamics}.
The second learning problem, called the \textit{diffusion OpInf problem}, then seeks a constant matrix~$\widehat{\bH} \in \real^{r \times r}$ that best fits the residual covariance dynamics unexplained by the drift terms, i.e., the diffusion contribution via:
\begin{equation}
    \min_{\widehat{\bH} \in \real^{r \times r}} \sum_{k=0}^{s} \left\| \dot{\widehat{\bC}}_{k}^L - \left[ \widehat{\bP}_{k} \widehat{\bC}_{k}^L + \widehat{\bC}_{k}^L {\widehat{\bP}_{k}}^\top \right] - \widehat{\bH} \right\|_F^2,
\label{eq:diffusion_OpInf_optimization}
\end{equation}
where $\widehat{\bP}_{k} \coloneqq \widehat{\bP}(t_k) = \widehat{\bA} + \sum_{\ell=1}^{m} \widehat{\bN}_{\ell} u_{\ell}(t_k)$.
Notably, the system matrix of the covariance dynamics, $\widehat{\bP}(t)$, is defined as the sum of linear and bilinear operators, which are learned via the drift OpInf problem~\eqref{eq:drift_OpInf_compact_form}.
The time derivative of~$\widehat{\bC}_{k}^L$ is estimated using a 2nd-order central finite difference method in this work.
Ultimately, we obtain the diffusion operator by matrix decomposition of~$\widehat{\bH} = \widehat{\bM} \bK \widehat{\bM}^\top$. 
Note that the solution matrix~$\widehat{\bH}$ obtained via the least-squares problem~\eqref{eq:diffusion_OpInf_optimization} is not necessarily symmetric positive semidefinite. Thus, we seek to obtain the symmetric positive semidefinite approximant of $\widehat{\bH}$, using the method introduced in~\cite{higham1988computing}.
This is first done by symmetrizing $\widehat{\bH}$ by $\widehat{\bH}^{\text{sym}} = \frac{\widehat{\bH} + \widehat{\bH}^\top}{2}$, and computing eigendecomposition~$\widehat{\bH}^{\text{sym}} = \bQ \mathbf{\Lambda} \bQ^\top$, where the columns of~$\bQ$ are eigenvectors of the symmetrized $\widehat{\bH}^{\text{sym}}$ and $\mathbf{\Lambda}$ is a diagonal matrix containing the eigenvalues of $\widehat{\bH}^{\text{sym}}$. Then we clamp negative eigenvalues to zero to obtain the symmetric positive semidefinite approximant of~$\widehat{\bH}$ as $\widehat{\bH}^{\succeq 0} = \bQ \mathbf{\Lambda_+} {\bQ}^\top$, where $\mathbf{\Lambda}_+ = \text{diag}(\text{max}(\lambda, 0))$.
\subsubsection{Summary of the stochastic OpInf ROM learning framework}
\label{sss:summary_stochastic_OpInf_ROM}
The complete stochastic ROM learning framework is illustrated in~\Cref{fig:stochastic_OpInf_framework}. 
In the experimental phase, high-fidelity DHM measurements are acquired and verified before ROM construction. The consistency of the measurement samples is assessed by checking their spectral contents, such as the PSDs, to ensure that the data adequately represent the underlying stochastic dynamics. If inconsistencies among the datasets are detected, additional experiments are performed. 
Among the verified measurements, the one-dimensional profiles extracted at the center of the spatial domain are used to assemble the state snapshot matrix~\eqref{eq:high_fidelity_state_snapshot}. We compute the SVD of~\eqref{eq:high_fidelity_state_snapshot} to obtain the POD basis, onto which the high-dimensional states are projected to obtain the reduced state snapshots~\eqref{eq:reduced_state_snapshot}. These reduced states are then used to estimate their mean and covariance, from which the dynamical equations for the mean~\eqref{eq:ROM_expectation_dynamics} and covariance~\eqref{eq:ROM_covariance_dynamics} are constructed. Two optimization problems~\eqref{eq:drift_OpInf_optimization} and~\eqref{eq:diffusion_OpInf_optimization} are subsequently solved in sequence to learn the drift and diffusion operators of the stochastic ROM. Finally, the learned ROM is simulated using an appropriate numerical integrator (e.g., the implicit Euler--Maruyama scheme), and the resulting reduced state trajectories are lifted to the high-dimensional state space.
\begin{figure}[htbp]
\centering
\includegraphics[width=0.85\linewidth]{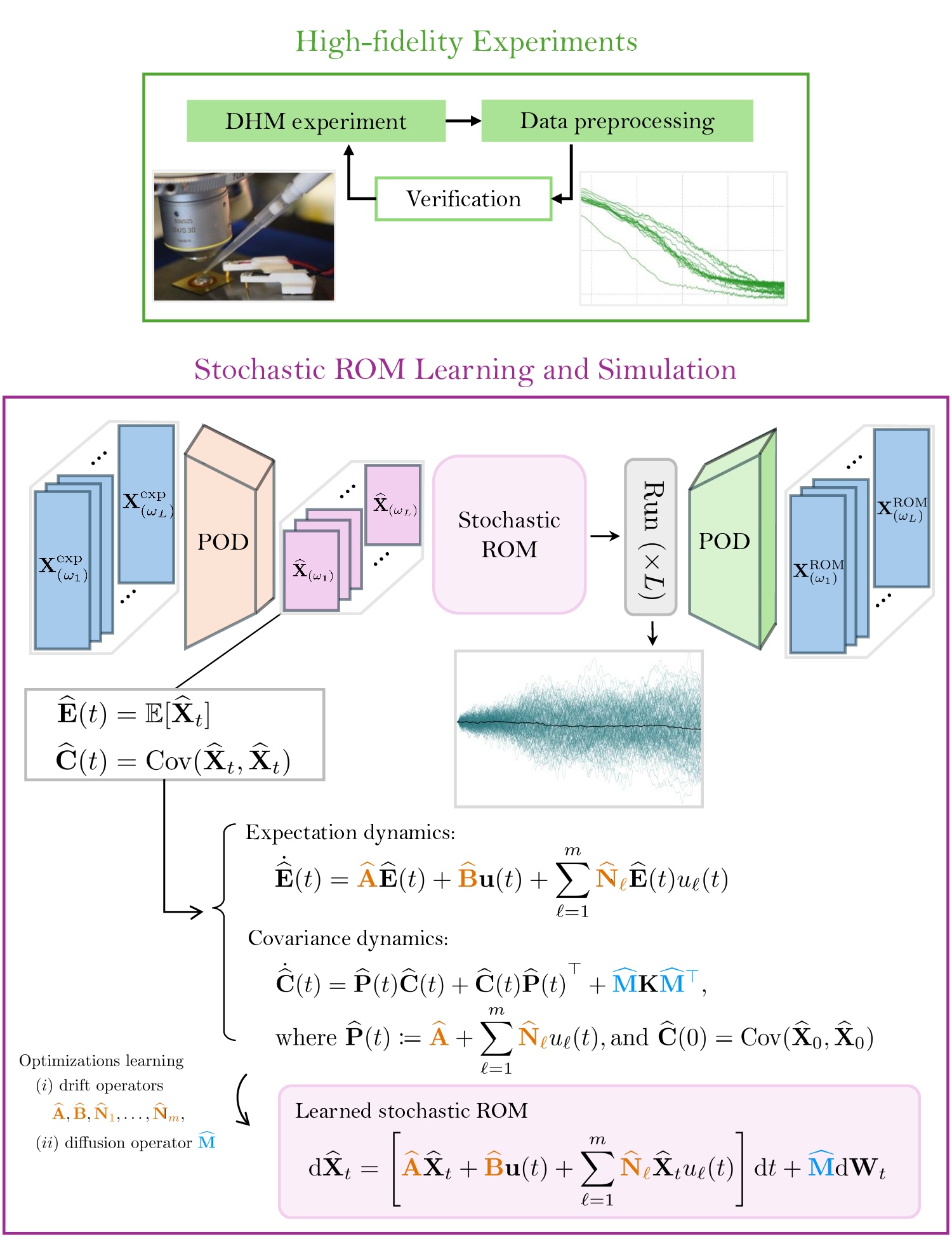}
\caption{Schematic of the stochastic OpInf ROM learning framework. High-fidelity DHM measurements are first verified to ensure that they consistently represent the underlying stochastic dynamics. Each block $\bX_{(\omega_j)}^{\textnormal{exp}}$, where $j=1,\ldots,L$, represents the trajectory of experimental measurement snapshots associated with the $j$th noise realization, i.e., $\bX_{(\omega_j)}^\textnormal{exp} = [\bX(t_0,\omega_j), \ldots, \bX(t_s,\omega_j)]$, see Eq.~\eqref{eq:high_fidelity_state_snapshot}.
The reduced state snapshots $\widehat{\bX}_{\omega_j}$ are obtained by projecting the high-dimensional states onto a POD basis. The reduced states are then used to estimate the mean and covariance, from which two sequential optimization problems~\eqref{eq:drift_OpInf_compact_form} and~\eqref{eq:diffusion_OpInf_optimization} learn the drift and diffusion operators of the stochastic ROM. Finally, the learned ROM is simulated and the reduced state trajectories are lifted to the high-dimensional state space.}
\label{fig:stochastic_OpInf_framework}
\end{figure}

\subsubsection{Evaluation of the learned ROM via mean and covariance errors}    \label{sss:stochastic_ROM_error_metrics}
Once the drift and diffusion operators are learned, we use the implicit Euler--Maruyama scheme to simulate the learned ROM SDE~\eqref{eq:bilinear_stochastic_ROM}.
We simulate the ROM using the initial conditions of all $L$ samples, obtained via projection of all high-dimensional samples in Eq.~\eqref{eq:high_fidelity_state_snapshot}. These initial conditions are denoted as~$\mathbb{X}_{r,0}$ in the state snapshot matrix~\eqref{eq:reduced_state_snapshot}.
For experimental data, the number of realizations of initial conditions is limited by the available measurements, and additional samples require collecting new experimental realizations or extrapolating from fitted distributions. 
In our case, the stochastic ROM operators are learned from the available $L$ reduced state samples obtained from the DHM measurement samples, determined as in Eq.~\eqref{eq:sample_size}.
Furthermore, we note that the noise realizations $\bW_t$, which are drawn with a fixed random number generator seed during the ROM simulations, are not directly accessible from the DHM measurements. Therefore, obtaining a ROM that achieves a pathwise (strong) approximation of the original state is not feasible.
Instead, we aim to find the ROM that approximates the original states in a weak sense: the learned ROM is considered accurate if its state distribution matches that from the original reduced states at each time step $t \in [0, T]$.
In practice, we assess this by computing the first two moments---mean and covariance---and the corresponding errors are defined as:
\begin{equation}
    \varepsilon_{\textnormal{mean}} \coloneqq \frac{\sum_{k=0}^{s}  \left \| \widehat{\bE}_{k}^L - \widehat{\bE}^{L,\textnormal{ROM}}_{k} \right \|_2^2}{\sum_{k=0}^{s} \left \| \widehat{\bE}_{k}^{L} \right \|_2^2}, \quad
    \varepsilon_{\textnormal{cov}} \coloneqq \frac{\sum_{k=0}^{s}  \left \| \widehat{\bC}_{k}^L -  \widehat{\bC}^{L,\textnormal{ROM}}_{k} \right \|_F^2}{\sum_{k=0}^{s} \left \| \widehat{\bC}_{k}^{L} \right\|_F^2}.
\label{eq:weak_mean_covaraince_error}
\end{equation}    
Here, $\widehat{\bE}_{k}^{L, \textnormal{ROM}}$ and $\widehat{\bC}_{k}^{L, \textnormal{ROM}}$ are the empirical mean and covariance, estimated by the $L$ ROM states at the $k$th time step. They are computed by Eqs.~\eqref{eq:empirical_mean_reduced_state} and~\eqref{eq:empirical_covariance_reduced_state} but using the ROM results.
% \begin{align}
% \widehat{\bE}_{k}^{L, \textnormal{ROM}} &= \frac{1}{L}\sum_{j=1}^{L} \widehat{\bX}_{k}^{\textnormal{ROM}}(\omega_j), 
% \label{eq:empirical_mean}
% \\    
% \widehat{\bC}_{k}^{L, \textnormal{ROM}} &= \frac{1}{L-1} \sum_{j=1}^{L} \left[ \widehat{\bX}_{k}^{\textnormal{ROM}}(\omega_j) - \widehat{\bE}_{k}^{L,\textnormal{ROM}} \right ] \left[ \widehat{\bX}_{k}^{\textnormal{ROM}}(\omega_j) - \widehat{\bE}_{k}^{L,\textnormal{ROM}} \right ]^\top,
% \label{eq:empirical_covariance}
% \end{align}
% where $\widehat{\bX}_{k}^{\textnormal{ROM}}(\omega_j) \in \real^{r}$ is the ROM state for the $j$th noise sample, at the time step $k$.

%%%%%%%%%%%%%%%%%%%%%% Regularization hyperparameters selection %%%%%%%%%%%%%%%%%%%%%%%%%%%
\subsection{Regularization hyperparameters associated with spectral contents of ROM trajectories}    \label{ss:regularization_hyperparameters_selection_based_on_spectral_contents}
Within the deterministic OpInf ROM framework, Tikhonov regularization has been widely used to improve numerical accuracy and stability across many applications~\cite{McQuarrie_2021, mcquarrie2023nonintrusive, issan2023predicting, SAWANT2023115836, qian2022reduced, KIM2025114418, geelen2023learning, kang2026}.
In contrast, the original work on stochastic OpInf~\cite{freitag2025learning} did not apply regularization in its benchmark problems, leaving its role in the stochastic setting unexplored.
We incorporate Tikhonov regularization into the stochastic OpInf framework and show that, beyond improving numerical accuracy and stability, it carries additional physical significance: the regularization hyperparameters affect the spectral content of the resulting ROM state trajectories.
This role differs between the two OpInf learning problems: regularizing the deterministic drift terms helps mitigate aliasing effects, while regularizing the stochastic diffusion term modulates the stochastic energy injected into the ROM, which we discuss further next.

%%%%%%%%%%%%%%%%%%%%%% Regularization of bilinear operator %%%%%%%%%%%%%%%%%%%%%%%%%%%
\subsubsection{Mitigation of aliasing effects through bilinear operators regularization}
\label{sss:mitigate_aliasing}
There are two driving forces in the DHM system: one is the deterministic input signal that excites the transducer, and the other is the stochastic process noise present in the system.
Specifically, our input signal is a scalar (i.e., the input dimension $m=1$), given by $u(t)=A_f \cos(2\pi f_0 t)$ with frequency $f_0 = 7 \times 10^6\,\text{Hz}$.
For the notation, in what follows we write the input as the scalar $u(t)$ (and $u_k$ at discrete time $t_k$), rather than the general vector-valued $\bu(t)$ used earlier, and accordingly drop the subscript $\ell$ on $\widehat{\bN}$ since only a single bilinear operator $\widehat{\bN}$ is needed.
Consequently, our ROM becomes
\begin{equation}
    \mathrm{d} \widehat{\bX}_{t} = \left[\widehat{\bA} \widehat{\bX}_{t} + \widehat{\bB} u(t) + \widehat{\bN} \widehat{\bX}_{t} u(t) \right] \mathrm{d} t + \widehat{\bM} \mathrm{d} \bW_t.
\label{eq:ROM_SDE_scalar_input}
\end{equation}
The DHM camera has a sampling rate of $f_s = 115{,}200\,\text{Hz}$, yielding the Nyquist frequency of $f_N = f_s/2 = 57{,}600\,\text{Hz}$.
When we simulate the ROM using the corresponding time step of $1/115{,}200 \approx 8.68 \times 10^{-6} \, \textnormal{s}$, the input signal---well above the Nyquist frequency $f_N$---is under-resolved, resulting in aliasing.
Since the ROM SDE~\eqref{eq:bilinear_stochastic_ROM} is formulated in continuous time, it could in principle be simulated with a much finer time step that resolves our high-frequency input signal.
However, this requires a proportionally larger number of time steps within the fixed time frame. In our case, this increases the online simulation cost in proportion to the downsampling ratio, undermining the computational efficiency that motivates the ROM in the first place.
The aliasing frequency is given by
\begin{equation}
    f_{\text{alias}} = |f_0 - \kappa f_s|,
\label{eq:aliasing_frequency}
\end{equation}
where the integer $\kappa$ is chosen such that $f_{\text{alias}} \in [0, f_N]$.
In our case, $f_0/f_s=7{,}000{,}000\,\textnormal{Hz}/115{,}200\,\textnormal{Hz} \approx 60.76$, so the nearest integer is $\kappa=61$. This yields the aliasing frequency of $f_{\text{alias}}=|f_0 - 61f_s| = 27{,}200 \, \textnormal{Hz}$.
\par
To mitigate this effect, we apply Tikhonov regularization to the drift OpInf problem~\eqref{eq:drift_OpInf_optimization}:
\begin{equation}
    \min_{\widehat{\bA}, \widehat{\bB}, \widehat{\bN}} 
    \sum_{k=0}^{s} \left\| \widehat{\bA} \widehat{\bE}_{k}^{L} + \widehat{\bB} u_k 
    + \widehat{\bN} \widehat{\bE}_{k}^{L} u_k 
    - \dot{\widehat{\bE}}_{k}^{L} \right\|_2^2
    + \gamma_1 \|\widehat{\bA}\|_F^2 
    + \gamma_2 \|\widehat{\bB}\|_F^2 
    + \gamma_3 \|\widehat{\bN}\|_F^2,
\label{eq:drift_OpInf_with_regularization}
\end{equation}
where $\gamma_1$, $\gamma_2$, $\gamma_3$ penalize the drift operators $\widehat{\bA}$, $\widehat{\bB}$, $\widehat{\bN}$, respectively.
Note that although the excitation amplitude $A_f$ prescribed by the signal generator (see~\Cref{fig:DHM_setup_droplet_shcematic}) varies across experimental cases, we fix the amplitude of the ROM input signal $u(t)$ to $A_f = 1$ when solving Eq.~\eqref{eq:drift_OpInf_with_regularization} for all cases in~\Cref{tab:experimental_datasets_capillary_wave}.
This is because the effective forcing amplitude acting on the measured fluid surface domain is unknown. Specifically, the input signal is amplified before driving the piezoelectric transducer, and although the particle velocity at the transducer is directly measured by the laser Doppler vibrometer, the transfer of this actuation to the DHM measurement domain through the fluid layer cannot be characterized.
Therefore, we let the regularized drift OpInf problem~\eqref{eq:drift_OpInf_with_regularization} learn the input operator~$\widehat{\bB}$, which absorbs the unknown scaling between the prescribed input signal and its effective influence on the ROM dynamics.
Since the aliased frequency falls within the resolvable band, it can consequently enter the simulated dynamics through the input-dependent terms $\widehat{\bB} u(t)$ and $\widehat{\bN} \widehat{\bX}_{t} u(t)$ in the drift, causing spurious content to appear in the frequency spectrum of the ROM results. 
Regularizing the operators $\widehat{\bB}$ and $\widehat{\bN}$ suppresses the injection of aliased energy into the ROM, thereby mitigating its effect on the ROM results' spectral content.
In our case, penalizing the bilinear operator $\widehat{\bN}$ showed more substantial impact on aliasing mitigation.
\par
\Cref{fig:aliasing_N_regs_comparison} compares the ROM's spectral content across different levels of $\widehat{\bN}$ regularization.
Regularizing the bilinear operator with $\gamma_3= 10^1, 10^2, 10^3, 10^4, 10^5$ produces a monotonic decrease in its Frobenius norm, from $7.97 \times 10^4$ to $3.33 \times 10^4$, $1.83 \times 10^3$, $6.73 \times 10^1$, and $8.20 \times 10^{-1}$.
Over the same range of $\gamma_3$, the weak mean error decreases sharply from $1.53 \times 10^{-1}$ to $5.57 \times 10^{-2}$ at $\gamma_3=10^2$, then plateaus for $\gamma_3 = 10^3, 10^4, 10^5$ at $5.55 \times 10^{-2}$, $6.12 \times 10^{-2}$, and $6.13 \times 10^{-2}$, respectively.
The PSDs, computed from the ROM results obtained under different levels of regularization for $\widehat{\bN}$, show a spurious peak at $f_\textnormal{alias} = 27{,}200\,\textnormal{Hz}$ (computed via Eq.~\eqref{eq:aliasing_frequency}) for $\gamma_3 = 10^1$ and $10^2$, while this peak does not appear for $\gamma_3 = 10^3, 10^4, 10^5$.
Notably, $\gamma_3 = 10^2$ achieves a weak mean error comparable to the larger regularization cases, yet its PSD still exhibits the aliasing artifact.
This shows that, even though the ROM indicates good accuracy in terms of the weak mean error, spurious aliased frequencies can still be present in the resulting ROM trajectories. 
In practice, $\gamma_3$ is selected by minimizing the weak mean error, searched over a wide initial range, e.g., $\gamma_3 \in [10^0, 10^{10}]$.
If the selected ROM generates state trajectories that exhibit aliasing artifacts in the frequency domain, we narrow the search range---e.g., $\gamma_3 \in [10^5, 10^{10}]$, or even more restrictively to $[10^8, 10^{10}]$---and reselect $\gamma_3$ within this restricted range. We begin with a wide range to first explore the hyperparameter space broadly, narrowing it only when aliasing artifacts are observed.
\begin{figure}[!ht]
    \centering
    \begin{subfigure}[t]{0.49\linewidth}
        \centering
        \includegraphics[width=\linewidth]{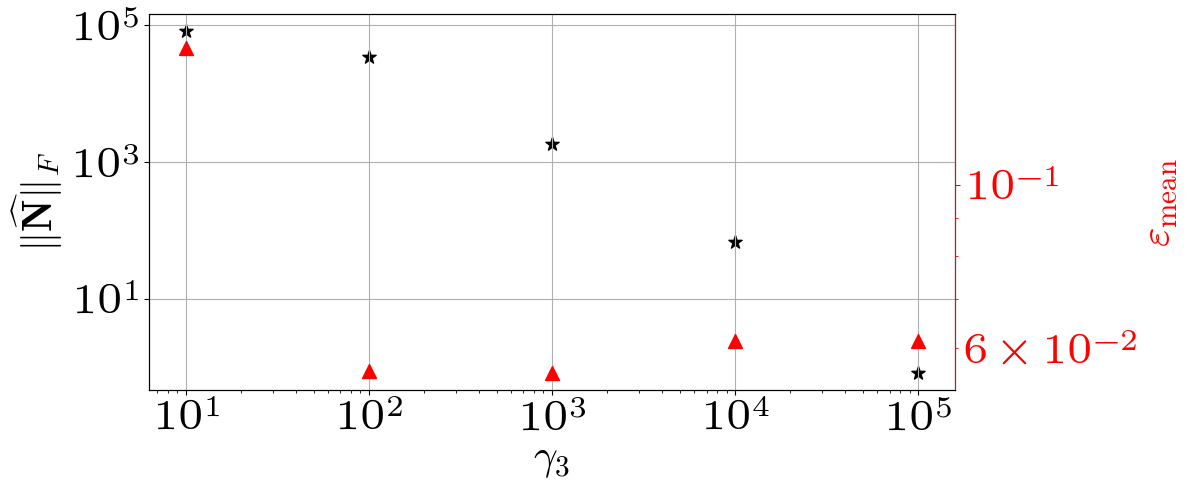}
        % \caption{}
        \label{fig:aliasing_N_r_norm_weak_mean_error_comparison}
    \end{subfigure}
    \hfill
    \begin{subfigure}[t]{0.49\linewidth}
        \centering
        \includegraphics[width=\linewidth]{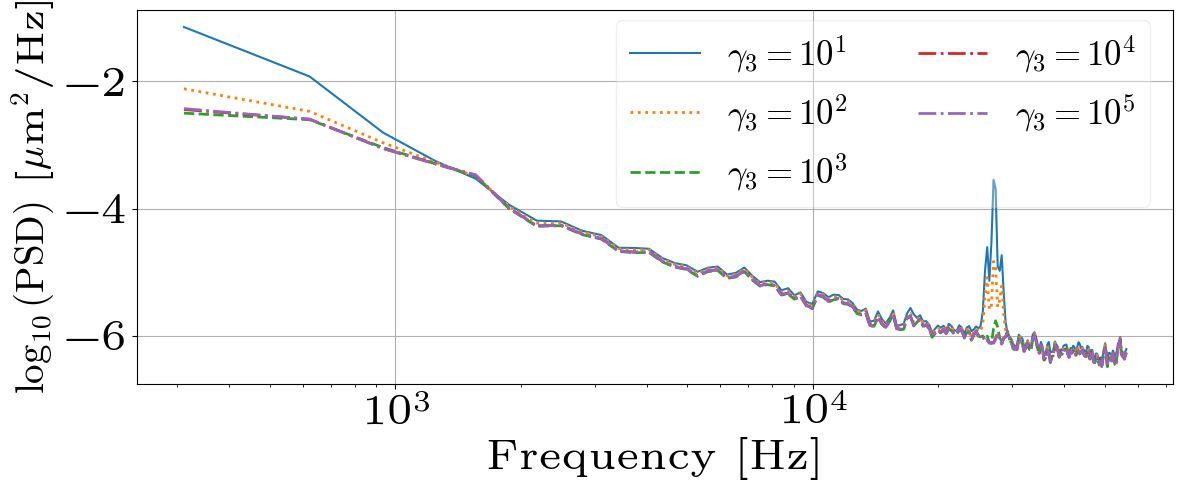}
        % \caption{}
        \label{fig:aliasing_PSD_comparison}
    \end{subfigure}
    \vspace{-1em}
    \caption{Left: Comparison of the Frobenius norms of the learned bilinear operators and the weak mean errors, defined in Eq.~\eqref{eq:weak_mean_covaraince_error}, across different levels of penalization. Right: Comparison of the corresponding PSDs under different penalization levels. The PSDs are estimated via Welch’s method, using the mean time series at the midpoint of the 1D spatial domain.}
    \label{fig:aliasing_N_regs_comparison}
\end{figure}

\begin{figure}[!ht]
    \centering
    \includegraphics[width=0.75\linewidth]{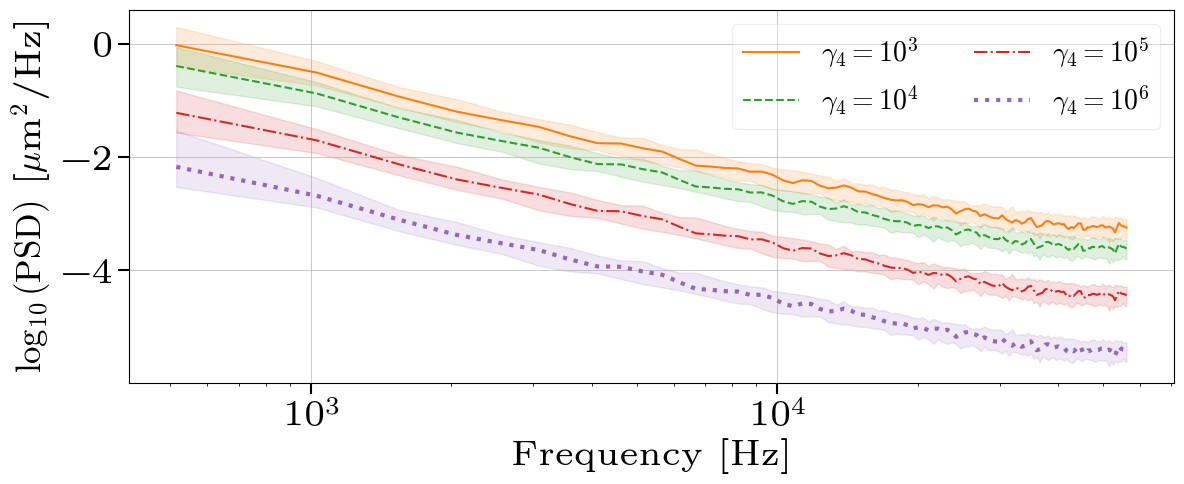}
    \caption{Comparison of the spectral content of ROM trajectories, generated under different levels of regularization for the diffusion term in Eq.~\eqref{eq:diffusion_OpInf_with_regularization}.
    The PSDs are estimated via Welch's method, using $L=272$ reconstructed ROM trajectory samples at the midpoint of the 1D spatial domain. For each case, the mean of the $272$ PSDs is shown along with the corresponding band.}
    \label{fig:Diffusion_regularization_comparison}
\end{figure}

%%%%%%%%%%%%%%%%%%%%%% Regularization of diffusion operator %%%%%%%%%%%%%%%%%%%%%%%%%%%
\subsubsection{Modulation of ROM stochastic energy through diffusion operator regularization}
\label{sss:modulate_PSDs}
While the discussion above focused on regularizing the input-related terms---especially the bilinear term---to mitigate aliasing, regularization of the diffusion contribution serves a distinct role: it modulates the stochastic energy injected into the ROM.
To this end, we apply Tikhonov regularization to the diffusion OpInf problem~\eqref{eq:diffusion_OpInf_optimization}:
\begin{equation}
    \min_{\widehat{\bH}} \sum_{k=0}^{s} \left\| \dot{\widehat{\bC}}_{k}^L - \left[ \widehat{\bP}_{k} \widehat{\bC}_{k}^L + \widehat{\bC}_{k}^L {\widehat{\bP}_{k}}^\top \right] - \widehat{\bH} \right\|_F^2 + \gamma_4 \left\| \widehat{\bH} \right \|_F^2,
\label{eq:diffusion_OpInf_with_regularization}
\end{equation}
where $\gamma_4$ penalizes the operator $\widehat{\bH}$ that includes the diffusion operator $\widehat{\bM}$ before factorization, see~\Cref{sss:diffusion_opinf_problem}.
The operators learned by solving the regularized drift and diffusion OpInf problems of Eqs.~\eqref{eq:drift_OpInf_with_regularization} and~\eqref{eq:diffusion_OpInf_with_regularization} are used to learn our SDE ROM operators in~\eqref{eq:ROM_SDE_scalar_input}.
In particular, the diffusion-term regularization hyperparameter $\gamma_4$ affects the norm of the learned diffusion operator $\widehat{\bM}$.
Since $\widehat{\bM}$ acts as the stochastic forcing in our ROM~\eqref{eq:ROM_SDE_scalar_input}, this in turn directly affects the stochastic energy injected into the system. 
To isolate this effect, we vary $\gamma_4$ while holding all other modeling knobs fixed: the ROM dimension $r=14$, the linear-term regularization $\gamma_1 = 10^3$, the input-term regularization $\gamma_2 = 10^6$, and the bilinear-term regularization $\gamma_3 = 10^8$.
Increasing $\gamma_4$ across $10^3, 10^4, 10^5, 10^6$ causes the Frobenius norm of the learned diffusion operator $\widehat{\bM}$ to decrease monotonically from $1.23 \times 10^5$ to $8.06 \times 10^4$, $3.11 \times 10^4$, and $1.01 \times 10^4$.
\Cref{fig:Diffusion_regularization_comparison} compares the resulting spectral energy, computed from $L$ generated ROM trajectories, across these same four levels of diffusion-term regularization.
As $\gamma_4$ increases and the norm of $\widehat{\bM}$ decreases, the overall spectral content of the ROM trajectories decreases correspondingly.
This shows that $\gamma_4$ directly modulates the stochastic energy injected into the learned ROM.
Together with the weak mean and covariance errors discussed in~\Cref{sss:stochastic_ROM_error_metrics}, these physically interpretable roles motivate considering spectral contents for the regularization strategy: while $\gamma_3$ is selected via grid search as part of the ROM dimension selection procedure that we discuss next, $\gamma_4$ is chosen heuristically, guided by the extent to which the ROM captures spectral energy of the experimental data.
%

%%%%%%%%%%%%%%%%%%%%%% ROM dimension selection %%%%%%%%%%%%%%%%%%%%%%%%%%%
\subsection{ROM dimension selection via constrained multi-objective optimization}    \label{ss:r_selection_based_on_mean_cov_error_tradeoff}
In the POD approach, the accuracy of a ROM critically depends on the quality of the basis, i.e., how well it captures the high-fidelity numerical simulation or experimental data. 
Selecting a proper ROM dimension $r$ in Eq.~\eqref{eq:reduced_state_snapshot} is thus a central step in constructing an accurate ROM~\eqref{eq:bilinear_stochastic_ROM}.
The most common approach to determine this is the energy criterion, informed by the singular values of the snapshot matrix in Eq.~\eqref{eq:high_fidelity_state_snapshot}, see, e.g.,~\cite{buchan2015pod}.
Specifically, when the basis matrix~$\mathbf{\Phi}$ is used, the projection error is defined as
\begin{equation*}
    \cE_{\text{proj}}(r) = \frac{\left\| \mathbb{X} - \bPhi_r \bPhi_r^\top \mathbb{X} \right\|_F^2}{\left\| \mathbb{X} \right\|_F^2} = 1 - \cE_{\text{cum}}(r),
% \label{eq:projection_cumulative_error}
\end{equation*}
where $\bPhi_r \bPhi_r^\top \in \mathbb{R}^{n \times n}$ is a linear projection onto the $r$-dimensional subspace spanned by the basis vectors (columns of~$\bPhi_r$). 
Here, the cumulative energy $\cE_{\text{cum}}(r)$ is defined as the ratio of the sum of squared singular values up to rank $r$ to the total sum of squared singular values:
\begin{equation*}
    \cE_{\text{cum}}(r) = \frac{\sum_{\eta=1}^{r} \sigma_\eta^2}{\sum_{\eta=1}^{n} \sigma_\eta^2},
% \label{eq:cumulative_energy}
\end{equation*}
where the $\sigma_\eta$ are the singular values of the state snapshot matrix~$\mathbb{X}$.
However, even for many complex systems---including the data considered herein---the first one or two modes can capture over 99\% of the total energy, see also~\cite{deane1991low}.
Consequently, the conventional energy-based criterion becomes almost uninformative and provides little practical guidance for selecting the ROM dimension; any choice of $r \geq 2$ satisfies commonly used energy thresholds.
\begin{table}[htbp]
    \centering
    \caption{Cumulative energy $\cE_{\text{cum}}(r)$ captured by the first $r$ POD basis for different acoustic forcing cases $\mathrm{Ca}_{\mathrm{ac}}$.}
    \vspace{-0.5em}
    \begin{tabular}{r|llllll}
        $\mathrm{Ca}_{\mathrm{ac}}$ & $r=1$ & $r=2$ & $r=3$ & $r=4$ & $r=5$ & $r=6$ \\
        \hline
        $0.25 \times 10^{-3}$ & 0.999696 & 0.999931 & 0.999995 & 0.999996 & 0.999997 & 0.999997 \\
        $0.66 \times 10^{-3}$ & 0.998053 & 0.999656 & 0.999986 & 0.999990 & 0.999998 & 0.999998 \\
        $0.67 \times 10^{-3}$ & 0.997179 & 0.999705 & 0.999983 & 0.999990 & 0.999998 & 0.999999 \\
        $0.83 \times 10^{-3}$ & 0.998446 & 0.999798 & 0.999990 & 0.999998 & 0.999999 & 0.999999 \\
        % $1.15 \times 10^{-3}$ & 0.997846 & 0.999680 & 0.999979 & 0.999993 & 0.999995 & 0.999996 \\
        % $1.56 \times 10^{-3}$ & 0.998352 & 0.999767 & 0.999978 & 0.999993 & 0.999995 & 0.999996 \\
        % $1.76 \times 10^{-3}$ & 0.999640 & 0.999944 & 0.999990 & 0.999995 & 0.999996 & 0.999997 \\
        % $1.99 \times 10^{-3}$ & 0.999829 & 0.999960 & 0.999984 & 0.999989 & 0.999991 & 0.999992 \\
        % $2.84 \times 10^{-3}$ & 0.999916 & 0.999974 & 0.999985 & 0.999989 & 0.999991 & 0.999992 \\
        % $3.32 \times 10^{-3}$ & 0.999907 & 0.999971 & 0.999983 & 0.999987 & 0.999989 & 0.999991
    \end{tabular}
    \label{tab:all_cases_cumulative_energy_comparison}
\end{table}
\Cref{tab:all_cases_cumulative_energy_comparison} confirms this behavior for the four representative experimental cases shown; the remaining cases exhibit the same trend.
\Cref{fig:singular_values_cum_energy_0p08} further shows the rapid decay of the singular values, along with the fast saturation of the cumulative energy and the corresponding low projection errors achieved by only a few leading modes, for the case $\textnormal{Ca}_\textnormal{ac} = 0.67 \times 10^{-3}$.
\begin{figure}[h]
    \centering

    \begin{subfigure}[t]{0.49\linewidth}
        \vspace{0pt}
        \centering
        \includegraphics[width=\linewidth]{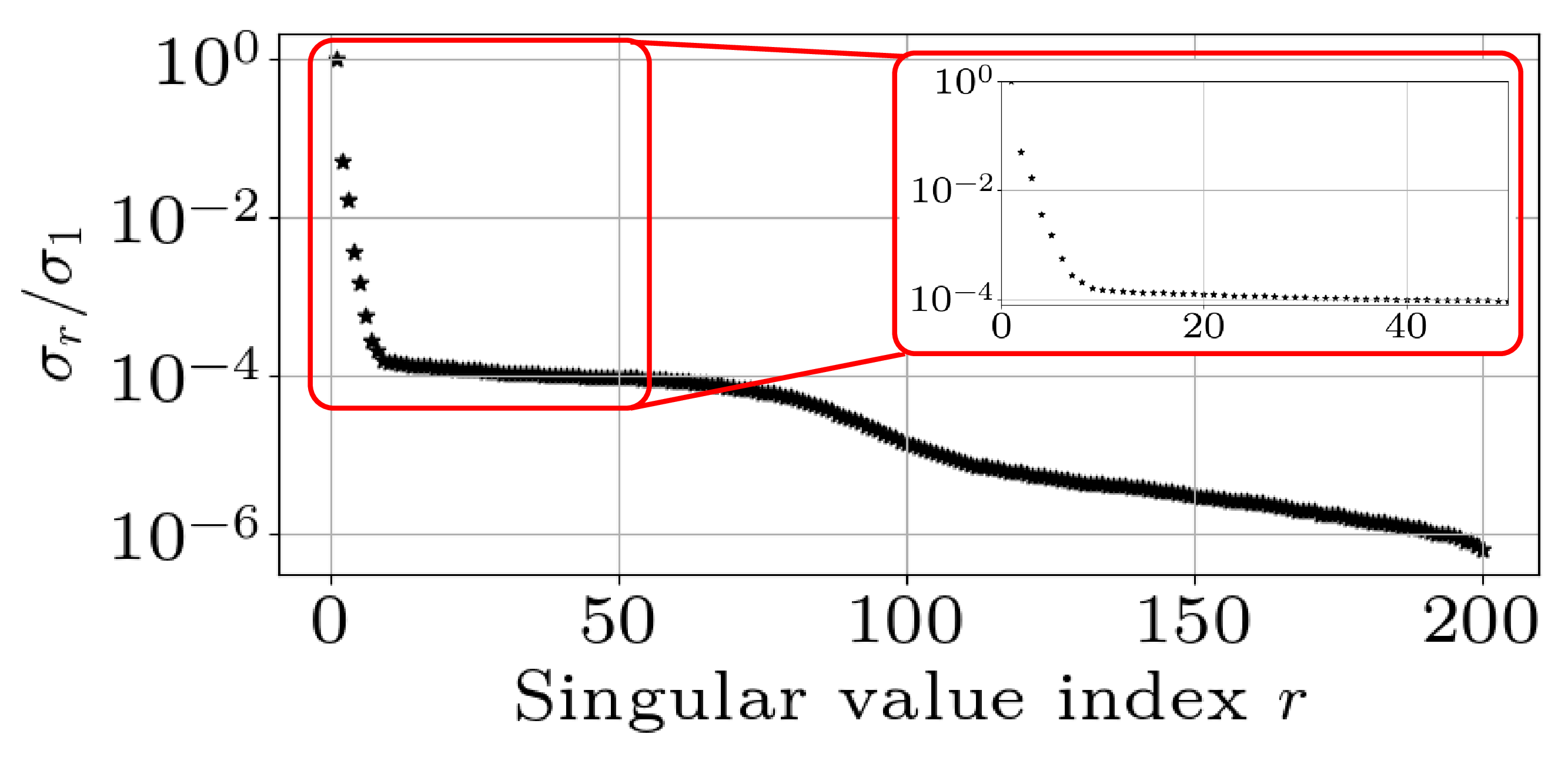}
        % \caption{}
        \label{fig:normalized_singular_values_0p08}
    \end{subfigure}
    \hfill
    \begin{subfigure}[t]{0.49\linewidth}
        \vspace{0pt}
        \centering
        \includegraphics[width=\linewidth]{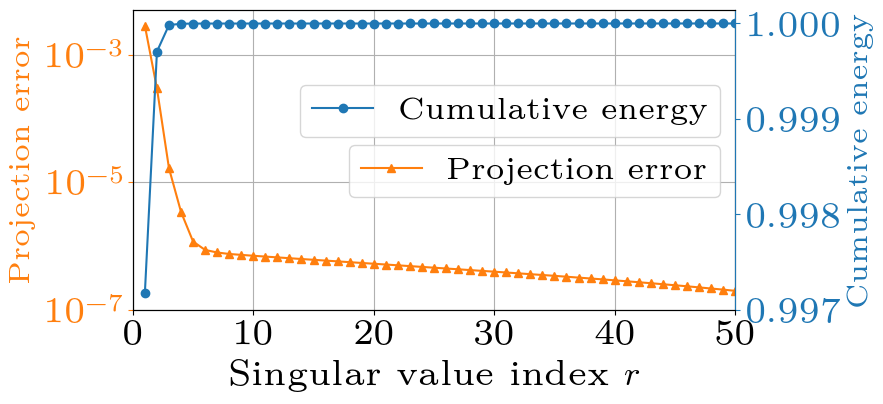}
        % \caption{}
        \label{fig:projection_cumulative_energy_0p08}
    \end{subfigure}
    \vspace{-1em}
    \caption{The case of $\textnormal{Ca}_\textnormal{ac}=0.67 \times 10^{-3}$. Left: Decay of the normalized singular values up to the full rank $n$. Right: Projection error and cumulative energy of the POD basis for ranks up to $r=50$.}
    \label{fig:singular_values_cum_energy_0p08}
\end{figure}
More importantly, satisfying a prescribed energy threshold does not guarantee downstream ROM accuracy.
As also observed in~\cite{deane1991low}, ROM accuracy need not improve monotonically with increasing dimension; a basis that captures nearly all of the snapshot energy may still produce an inaccurate ROM.
These limitations have been recognized in the literature.
The authors in~\cite{bhattacharyya2022experimental} propose an energy-closure criterion to complement situations where the simple energy-based approach for dimension selection is insufficient.
A related limitation arises when the POD basis is constructed in a parametric setting: the energy criterion provides no guarantee that the basis generalizes to different parameter regimes. The work~\cite{hay2009local} addresses this via sensitivity analysis, augmenting the basis to improve robustness across the parameter domain.
In the context of OpInf specifically, many prior works have selected the ROM dimension based on the energy criterion~\cite{Swischuk_2020, farcas2024domain, qian2019transform, farcas2023parametric}.
The authors in~\cite{geelen2022localized}, however, mention that the energy criterion is only used for snapshot clustering and classifier training in their localized ROM framework---not as a proxy for actual ROM accuracy.
The study~\cite{qian2022reduced} demonstrates that increasing the ROM dimension can increase the risk of overfitting, highlighting a trade-off: larger ROM dimensions offer greater expressivity, but lower-dimensional ROMs are often better able to learn accurate dynamics from limited training data.
\par
We argue that these considerations carry over in the stochastic setting. While the original stochastic OpInf study~\cite{freitag2025learning} observed that both the mean and covariance errors generally decreased and converged with increasing $r$ for the benchmark one-dimensional bilinear and two-dimensional linear stochastic heat equations, such well-behaved convergence cannot be expected to hold for general nonlinear systems, particularly for more complex systems such as the one considered herein.
This consideration is particularly important in our setting, where the ROM operators are identified through two consecutive learning problems: the drift-term learning and the diffusion-term learning problems.
Motivated by this, we propose a criterion for determining the ROM dimension $r$ that accounts for both mean and covariance accuracy.
The proposed approach consists of two steps. First, for each candidate ROM dimension $r$, we select the regularization hyperparameters $\bgamma=(\gamma_1,\gamma_2,\gamma_3,\gamma_4)$ that minimize the weak mean error.
Note that $\widehat{\bE}^{L,\textnormal{ROM}}_{k}$ and $\widehat{\bC}^{L,\textnormal{ROM}}_{k}$ in Eq.~\eqref{eq:weak_mean_covaraince_error} depend implicitly on $r$ and $\bgamma$; in what follows we make this dependence explicit by writing the corresponding errors $\varepsilon_{\textnormal{mean}}(r;\bgamma)$ and $\varepsilon_{\textnormal{cov}}(r;\bgamma)$, since the selection procedure requires comparing these errors across ROM dimension candidates and regularization parameters.
Minimizing the weak error over the regularization parameters for each ROM dimension is formulated as
\begin{equation*}
    \bgamma^{*}(r) = \operatorname*{arg\,min}_{\bgamma} \, \varepsilon_{\textnormal{mean}}(r; \bgamma).
% \label{eq:hyperparameters_minimizing_mean_error}
\end{equation*}
Using the resulting mean-optimal hyperparameters~$\bgamma^{*}(r)$, we compare weak covariance errors for all candidate ROM dimensions.
The ROM dimension is then selected by minimizing the weak covariance error among candidates whose mean error remains within a prescribed margin of the best mean error across all candidates.
This is formally expressed as
\begin{equation}
    r^{*} = \operatorname*{arg\,min}_{r} \varepsilon_{\textnormal{cov}}(r; \bgamma^{*}(r)), \quad \textnormal{subject to} \quad \varepsilon_{\textnormal{mean}}(r; \bgamma^{*}(r)) \leq \min_r(\varepsilon_\textnormal{mean}(r ; \bgamma^{*}(r))) + \tau_\textnormal{tol},
\label{eq:r_selection_criterion}
\end{equation}
where $\tau_\textnormal{tol}$ is a user-specified tolerance that controls how far the mean error of a candidate $r$ may deviate from the best achievable mean error.
This formulation ensures that the selected ROM dimension achieves good covariance accuracy without sacrificing mean accuracy, while accounting for regularization effects at each candidate dimension.
\begin{figure}[h]
\centering
\begin{subfigure}[t]{0.49\linewidth}
    \centering
    \includegraphics[width=\linewidth]{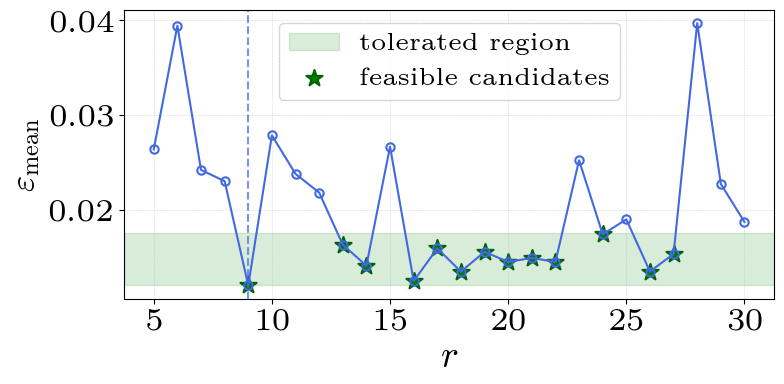}
    % \caption{}
    \label{fig:r_selection_mean_error}
\end{subfigure}
% \hfill
\begin{subfigure}[t]{0.49\linewidth}
    \centering
    \includegraphics[width=\linewidth]{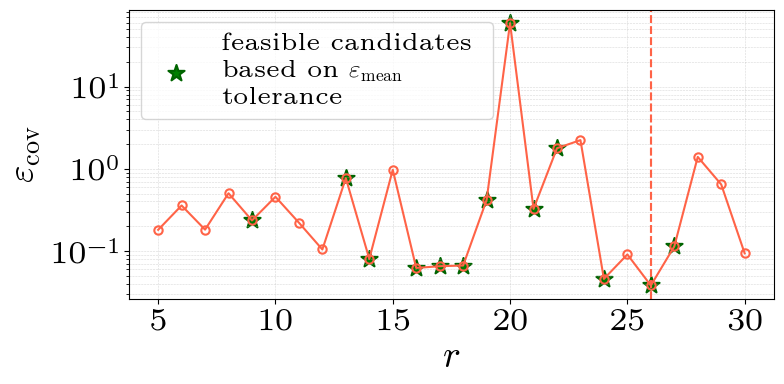}
    % \caption{}
    \label{fig:r_selection_covariance_error}
\end{subfigure}
\vspace{-1em}
\caption{Weak mean and covariance errors, computed using Eq.~\eqref{eq:weak_mean_covaraince_error}, are shown over ROM dimension candidates. The case $\textnormal{Ca}_\textnormal{ac} = 1.76 \times 10^{-3}$ is considered.
Left: Feasible ROM dimensions that satisfy the selection criterion in Eq.~\eqref{eq:r_selection_criterion} are shown. The shaded region indicates the tolerated region, and the dashed vertical line marks the $r$ that minimizes the weak mean error alone. Right: Weak covariance error over ROM dimensions. The dashed vertical line indicates the final selected ROM dimension among the feasible candidates.}
\label{fig:r_selection_based_on_mean_cov_errors}
\end{figure}
\par
\Cref{fig:r_selection_based_on_mean_cov_errors} illustrates the proposed selection process by showing the weak mean and covariance errors across all tested ROM dimension candidates $r = 5, 6, \ldots, 30$.
We can observe that the two errors do not necessarily exhibit the same trend over $r$: increasing the ROM dimension may decrease the mean error while simultaneously increasing the covariance error, and the dimensions that individually minimize each error do not generally coincide.
This highlights an inherent trade-off in ROM dimension selection between accurately capturing the first- and second-order moments of the reduced state.
Moreover, the ROM dimensions that individually minimize these two errors do not generally coincide, further highlighting the trade-off between accurately capturing first- and second-order moments of the reduced states.
This motivates our Pareto-inspired approach, which selects the ROM dimension by first enforcing a tolerance on the weak mean error, then minimizing the weak covariance error over the resulting feasible candidates.
In our case, the tolerance $\tau_\textnormal{tol}$ is defined as 
\begin{equation*}
    \tau_\textnormal{tol} = \beta \, \left( \max_r(\varepsilon_{\textnormal{mean}}(r; \bgamma^{*})) - \min_r(\varepsilon_{\textnormal{mean}}(r; \bgamma^{*})) \right),
% \label{eq:tolerance}
\end{equation*}
where $0 \leq \beta \leq 1 $ specifies the fraction of the observed variation in the weak mean error used to define the tolerance. 
For example, $\beta = 0$ corresponds to zero tolerance, admitting only the ROM dimension that yields the minimum mean error, while $\beta = 1$ admits all candidates. More realistically, $\beta = 0.2$ corresponds to a tolerance of 20\% of the range of the weak mean error across all $r$ candidates.
\par
Note that, although in this particular case the selected value $r=26$ coincides with the global minimum of the covariance error over all tested dimensions, this behavior is not guaranteed in general.
It is also possible that none of the candidate dimensions satisfying the mean error tolerance achieves a particularly low covariance error. 
Even in such cases, the proposed criterion still selects the candidate with the smallest covariance error among those that satisfy the prescribed mean accuracy requirement. This ensures that the reconstructed mean trajectory remains accurate, which is essential because it directly determines the downstream performance metrics, such as mean wave fields, PSD, bispectrum, and bicoherence, which we discuss further in~\Cref{sec:numerical_results}.
We also note that this approach incurs a higher computational offline ROM training cost than the standard strategy of energy-based dimension selection followed by separate hyperparameter tuning. Nonetheless, it offers a more systematic framework for selecting the \textit{optimal} ROM dimension given our criteria, while avoiding manual iteration over $r$-candidates and directly incorporating model performance and regularization effects into the selection process.
%
%

%%%%%%%%%%%%%%%%%%%%%%%%%%%% Numerical results %%%%%%%%%%%%%%%%%%%%%%%%%%%%%%%%
\section{Numerical results} \label{sec:numerical_results}
We present the numerical results of our study on stochastic OpInf ROMs for the experimental capillary wave turbulence data using several metrics.
The first three analyses assess how well the learned ROMs reconstruct the statistics and spectral characteristics of the data used for learning. In contrast, the final analysis evaluates their predictive capability on held-out, unseen data.
\Cref{ss:mean_covariance_comparison} demonstrates the ROM accuracy in terms of mean and covariance errors in the weak sense. For the covariance error, we also discuss how the covariance evolves in time when the implicit Euler--Maruyama scheme is used to evolve the stochastic ROM learned under this framework. 
\Cref{ss:wave_fields_comparison} compares the wave fields produced by the learned stochastic ROMs with those from the experimental data.
\Cref{ss:PSD_comparison_from_ROM results} presents results in the frequency domain, by comparing the data and ROM-reconstructed power spectral densities, which are commonly used to in the context of capillary wave turbulence.
Finally, \Cref{ss:Testing_performance_of_stochastic_ROM} shows the predictive capability of the stochastic ROMs on held-out testing data by demonstrating their ability to predict the three-wave coupling phenomenon that characterizes capillary wave turbulence.

\subsection{Mean and covariance accuracy of the stochastic ROM}    
\label{ss:mean_covariance_comparison}
Using the ROM dimension selection procedure proposed in~\Cref{ss:r_selection_based_on_mean_cov_error_tradeoff}, we obtain optimal ROMs that achieve low weak errors in the mean and covariance of the reduced state samples.
\begin{figure}[h]
\centering
\begin{subfigure}[t]{0.49\linewidth}
    \centering
    \includegraphics[width=\linewidth]{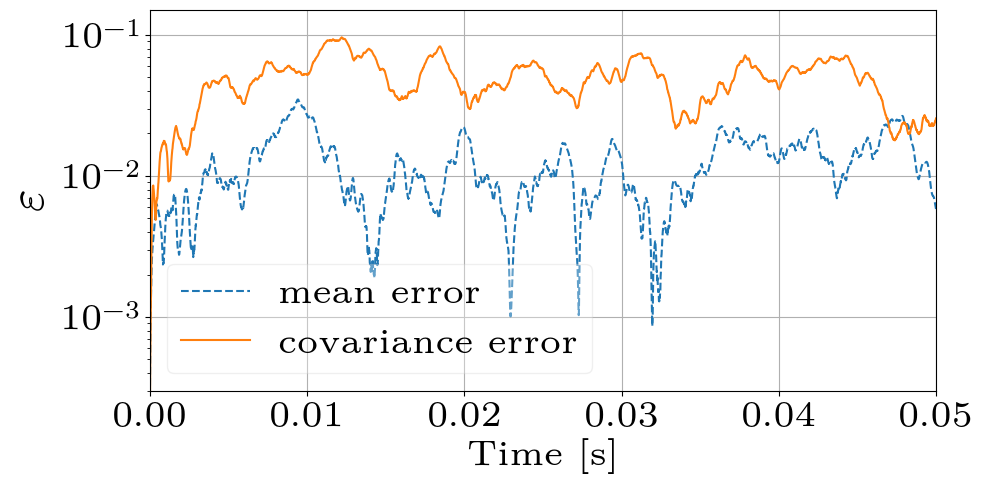}
    % \caption{The case of $\textnormal{Ca}_\textnormal{ac} = 0.83 \times 10^{-3}$.}
    \label{fig:mean_cov_error_over_time_0p10}
\end{subfigure}
% \hfill
\begin{subfigure}[t]{0.49\linewidth}
    \centering
    \includegraphics[width=\linewidth]{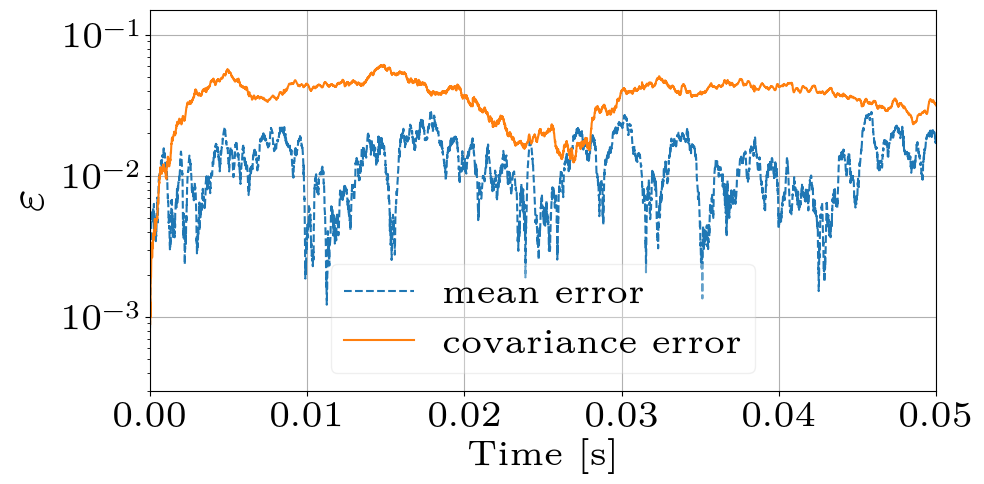}
    % \caption{The case of $\textnormal{Ca}_\textnormal{ac} = 1.76 \times 10^{-3}$.}
    \label{fig:mean_cov_error_over_time_0p20}
\end{subfigure}
\vspace{-1em}
\caption{Time evolution of the mean and covariance errors, computed for each time step in Eq.~\eqref{eq:weak_mean_covaraince_error}. The left and right plots represent the cases of $\textnormal{Ca}_\textnormal{ac} = 0.83 \times 10^{-3}$ and $\textnormal{Ca}_\textnormal{ac} = 1.76 \times 10^{-3}$, with the ROM dimensions of $r=23$ and $r=26$ chosen by the proposed criterion~\eqref{eq:r_selection_criterion}, respectively.}
\label{fig:mean_covariance_error_propagation}
\end{figure}
\Cref{fig:mean_covariance_error_propagation} shows the time evolution of the weak mean and covariance errors, computed using reduced states propagated by the learned stochastic ROMs.
Unlike the time-averaged errors defined in Eq.~\eqref{eq:weak_mean_covaraince_error}, the errors here are evaluated at each time step. The instantaneous mean and covariance errors at each time step $k$, where $k=0,\ldots,s$, are obtained using Eq.~\eqref{eq:weak_mean_covaraince_error} by evaluating the corresponding error at a single time step, without the summations over time.
% \begin{equation}
% \varepsilon_{\textnormal{mean},k} = \frac{\left \| \widehat{\bE}_{k}^L - \widehat{\bE}^{L,\textnormal{ROM}}_{k} \right \|_2^2}{\left \| \widehat{\bE}_{k}^L \right \|_2^2}, \quad \varepsilon_{\textnormal{cov},k} = \frac{\left \| \widehat{\bC}_{k}^L - \widehat{\bC}^{L,\textnormal{ROM}}_{k} \right \|_F^2}{\left \|\widehat{\bC}_{k}^L \right \|_F^2}.
% \label{eq:mean_cov_error_over_time}
% \end{equation}
%
In~\Cref{fig:mean_covariance_error_propagation}, an immediate observation is that both the mean and covariance errors remain bounded over the entire simulated time horizon rather than accumulating, indicating that the learned stochastic ROMs produce stable reconstructions.
Specifically, for the cases of $\textnormal{Ca}_\textnormal{ac} = 0.83 \times 10^{-3}$ and $\textnormal{Ca}_\textnormal{ac} = 1.76 \times 10^{-3}$, the maximum weak mean errors are $3.48 \times 10^{-2}$ and $2.87 \times 10^{-2}$, respectively, whereas the maximum weak covariance errors are $9.57 \times 10^{-2}$ and $6.14 \times 10^{-2}$, across the simulated time $t\in[0,0.05] \, \textnormal{s}$.
In addition, the time-averaged error metrics defined in Eq.~\eqref{eq:weak_mean_covaraince_error}, which are used in the ROM dimension selection criterion~\eqref{eq:r_selection_criterion}, yield $(\varepsilon_\textnormal{mean}, \varepsilon_\textnormal{cov})=(1.37 \times 10^{-2}, \, 5.19 \times 10^{-2})$ for $\textnormal{Ca}_\textnormal{ac}=0.83\times10^{-3}$, and $(\varepsilon_\textnormal{mean}, \varepsilon_\textnormal{cov}) = (1.35 \times 10^{-2}, \, 3.80 \times 10^{-2})$ for $\textnormal{Ca}_\textnormal{ac}=1.76\times10^{-3}$.
Together, these results demonstrate that the proposed ROM dimension selection criterion~\eqref{eq:r_selection_criterion} successfully identifies stochastic ROM dimensions that achieve low mean and covariance errors throughout the simulated time horizon.
For both cases in~\Cref{fig:mean_covariance_error_propagation}, the covariance error consistently exceeds the mean error by less than one order of magnitude. This shows that accurately reconstructing the covariance dynamics is more challenging than reconstructing the mean dynamics in this application. 
To better understand this behavior, we next analyze the covariance evolution under the implicit Euler--Maruyama time integration scheme.

\subsubsection{Covariance evolution over time}   \label{sss:covariance_over_time}

\begin{figure}[!ht]
\centering
\begin{subfigure}[t]{0.49\linewidth}
    \centering
    \includegraphics[width=\linewidth]{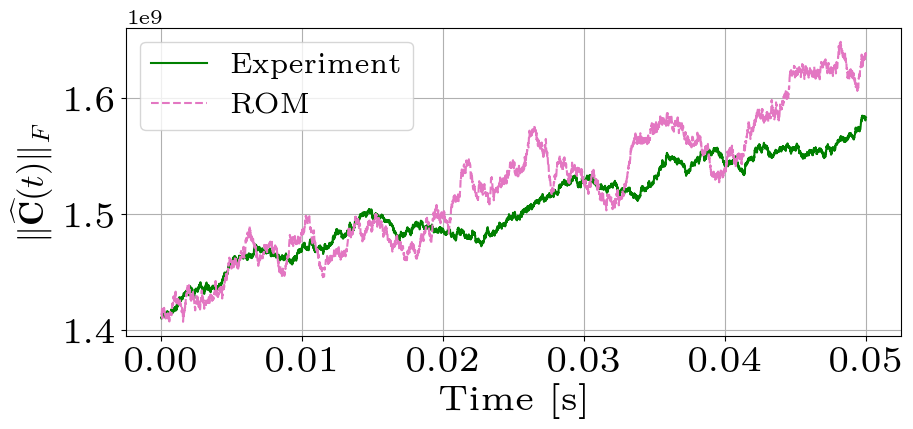}
    % \caption{}
    \label{fig:covariance_growth_0p30}
\end{subfigure}
\begin{subfigure}[t]{0.49\linewidth}
    \centering
    \includegraphics[width=\linewidth]{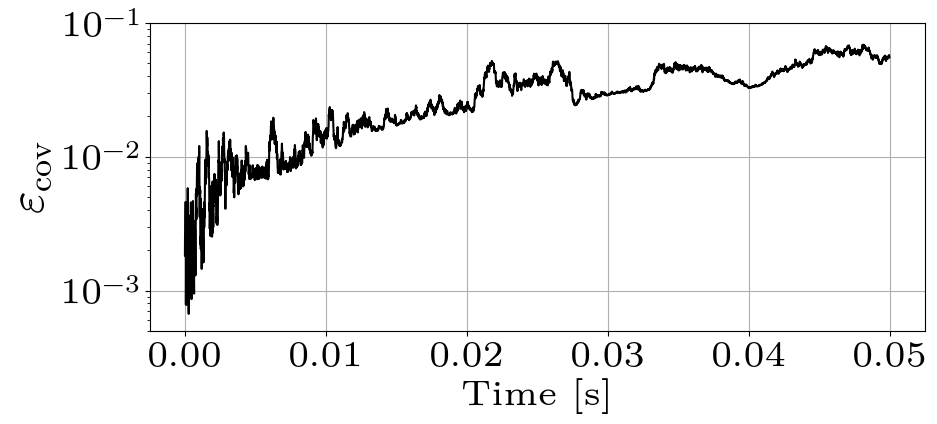}
    % \caption{}
    \label{fig:mean_cov_error_over_time_0p30}
\end{subfigure}
\vspace{-1em}
\caption{Left: Time evolution of the covariance of the reduced states, comparing between the experimental data and the ROM result. Right: Weak covariance error evolution over time. The case of $\textnormal{Ca}_\textnormal{ac} = 2.84 \times 10^{-3}$ is shown.}
\label{fig:covariance_growth}
\end{figure}
\Cref{fig:covariance_growth} compares the time evolution of the reduced-state covariance obtained from the experimental data and that generated by the stochastic ROM. 
For this case, the maximum covariance error is $7.04 \times 10^{-2}$, while the time-averaged weak covariance error defined in Eq.~\eqref{eq:weak_mean_covaraince_error} is $\varepsilon_{\textnormal{cov}} = 3.64 \times 10^{-2}$, using the ROM dimension $r=21$.
As shown in \Cref{fig:covariance_growth}, the weak covariance error grows over the simulated time $t \in [0, 0.05] \, \textnormal{s}$.
This behavior can be understood by covariance evolution induced by the implicit Euler--Maruyama time discretization of the stochastic ROM~\eqref{eq:bilinear_stochastic_ROM}, which governs how the covariance propagates from one time step to the next. 
Under this scheme, the reduced states evolve in discrete time with uniform time step~$h = t_{k+1} - t_{k}$, where $k = 0, \ldots, s-1$, through
\begin{equation}
    \widehat{\bX}_{k+1} - \widehat{\bX}_{k} = \left[ \widehat{\bA} \widehat{\bX}_{k+1} + \widehat{\bB} \bu_k + \sum_{\ell=1}^{m} \widehat{\bN}_{\ell} \widehat{\bX}_{k+1} u_{\ell,k} \right] h + \widehat{\bM} \bxi_k,
\label{eq:implicit_EM_scheme}
\end{equation}
where $u_{\ell,k} = u_\ell(t_k)$, and $\bxi_k \sim \cN(\mathbf{0},\, h\mathbf{I}_d)$, with~$\bI_d \in \real^{d \times d}$ being an identity matrix, is an independent Gaussian increment at each step. 
Equation~\eqref{eq:implicit_EM_scheme} can be rewritten as
\begin{equation}
    \widehat{\bX}_{k+1} = \widehat{\cT}_{k} \left[ \widehat{\bX}_{k} + h \widehat{\bB} \bu_k + \widehat{\bM} \sqrt{h} \bxi_k \right] ,
\label{eq:state_transition_implicit_EM}
\end{equation}
where the \textit{state transition matrix} $\widehat{\cT}_{k} \coloneqq \left[ \bI_r - h \left( \widehat{\bA} + \sum_{\ell=1}^{m} \widehat{\bN}_{\ell} u_{\ell,k} \right) \right]^{-1} \in \real^{r \times r}$.
Note that the covariance dynamics in the discrete-time domain are determined by the stochastic terms in Eq.~\eqref{eq:state_transition_implicit_EM}; the deterministic input term~$h \widehat{\bB} \bu_k$ does not contribute to the covariance of~$\widehat{\bX}_{t}$.
Defining the mean in the discrete time step $t_k$ as $\widehat{\bE}_k \coloneqq \mathbb{E} [\widehat{\bX}(t_k)]$, the covariance is estimated as
$\widehat{\bC}_{k} = \mathbb{E} \left[ \left( \widehat{\bX}_{k} - \widehat{\bE}_k \right) \left( \widehat{\bX}_{k} - \widehat{\bE}_k \right)^\top \right]$.
The covariance at the next time step $t_{k+1}$ can then be written as
\begin{equation}
    \widehat{\bC}_{k+1} = \mathbb{E} \left[ \left( \widehat{\bX}_{k+1} - \widehat{\bE}_{k+1} \right) \left( \widehat{\bX}_{k+1} - \widehat{\bE}_{k+1} \right)^\top \right].
\label{eq:covariance_discrete}
\end{equation}
To understand the covariance propagation, we begin by analyzing the first term on the right-hand side of~Eq.~\eqref{eq:state_transition_implicit_EM}. To do so, we plug this term into~Eq.~\eqref{eq:covariance_discrete}, which yields
\begin{equation}
\begin{split}
    \widehat{\bC}_{k+1}
    &= \mathbb{E} \left[ \left( \widehat{\cT}_{k} \widehat{\bX}_{k} - \widehat{\cT}_{k} \widehat{\bE}_k \right) \left( \widehat{\cT}_{k} \widehat{\bX}_{k} - \widehat{\cT}_{k} \widehat{\bE}_k \right)^\top \right] \\
    &= \widehat{\cT}_{k} \cdot \mathbb{E} \left[ \left( \widehat{\bX}_{k} - \widehat{\bE}_k \right) \left( \widehat{\bX}_{k} - \widehat{\bE}_k \right)^\top \right] \cdot \widehat{\cT}_{k}^\top \\
    &= \widehat{\cT}_{k} \widehat{\bC}_{k} \widehat{\cT}_{k}^\top.
\end{split}
\label{eq:covariance_discrete_from_drift}
\end{equation}
This resulting term describes how the covariance of the stochastic ROM state~$\widehat{\bX}_{t}$ evolves in discrete time caused by the drift dynamics encapsulated in $\widehat{\cT}_{k}$.
In addition, we do the same analysis using the diffusion term in Eq.~\eqref{eq:state_transition_implicit_EM}. Plugging it into Eq.~\eqref{eq:covariance_discrete} results in
\begin{equation}
\begin{split}
    \widehat{\bC}_{k+1}
    &= \mathbb{E} \left[ \left( \widehat{\cT}_{k} \widehat{\bM} \sqrt{h} \bxi_k \right) \left( \widehat{\cT}_{k} \widehat{\bM} \sqrt{h} \bxi_k \right)^\top \right] \\
    &= h \widehat{\cT}_{k} \widehat{\bM} \mathbb{E} \left[ \bxi_k \bxi_k^\top \right] \widehat{\bM}^\top \widehat{\cT}_{k}^\top \\
    &= h \widehat{\cT}_{k} \widehat{\bM} \widehat{\bM}^\top \widehat{\cT}_{k}^\top.
\end{split}
\label{eq:covariance_discrete_from_diffusion}
\end{equation}
Collecting the results in~Eqs.~\eqref{eq:covariance_discrete_from_drift} and~\eqref{eq:covariance_discrete_from_diffusion}, we obtain the following covariance difference equation when the implicit Euler--Maruyama scheme is used for the stochastic ROM simulation:
\begin{equation}
    \widehat{\bC}_{k+1} = \widehat{\cT}_{k} \widehat{\bC}_{k} \widehat{\cT}_{k}^\top + h \widehat{\cT}_{k} \widehat{\bM} \widehat{\bM}^\top \widehat{\cT}_{k}^\top.
\label{eq:covairance_evolution_implicit_EM_scheme}
\end{equation}
%
%
%%%%%
\subsubsection{Interpretation of the covariance difference equation}   \label{sss:interpretation_of_covariance_difference_equation}
Equation~\eqref{eq:covairance_evolution_implicit_EM_scheme} reveals that the covariance evolution consists of two distinct mechanisms: propagation of existing covariance through the learned drift dynamics and generation of additional covariance through the learned diffusion dynamics.
In particular, the first term in the right-hand side of Eq.~\eqref{eq:covairance_evolution_implicit_EM_scheme} describes how the covariance at time $k$ is propagated by the learned drift dynamics through the state transition matrix~$\widehat{\cT}_k$.
The second term, on the other hand, represents the added covariance associated with the stochastic forcing, where the learned diffusion coefficient matrix~$\widehat{\bM}$ determines the scaling and structure of the additional covariance.
Notably, this diffusion operator is also multiplied by $\widehat{\cT}_{k}$ and $\widehat{\cT}_{k}^\top$, which means that the drift and diffusion are intrinsically coupled in the covariance dynamics.
Thus, the learned drift dynamics do not merely propagate the existing covariance, but also influence the newly added one.
This term explains that the covariance evolution depends on both the drift and diffusion operators simultaneously, unlike the mean evolution, which depends only on the drift. 
\par
Since the state transition matrix~$\widehat{\cT}_k$ governs both covariance propagation and diffusion-induced covariance addition, its eigenvalues provide insight into the behavior of the covariance dynamics at each time step.
To assess this, we compute the eigenvalues of the state transition matrix at all time steps. 
For the case considered herein, where the ROM dimension is selected as $r=21$, the real parts of all 21 eigenvalues remain tightly clustered around $1$ throughout all time steps. Specifically, they remain within a narrow interval approximately between $0.99999$ and $1.00001$, corresponding to deviations on the order of $10^{-5}$.
Moreover, the maximum real part remains nearly constant over time at approximately $1.0000125$. Although this small deviation from 1 indicates a slight growth tendency in the covariance propagation, no instability is observed over the simulated interval $t\in[0,0.05]\,\textnormal{s}$ with the time step of approximately $8.68\times10^{-6}\,\textnormal{s}$.
In fact, as shown in~\Cref{fig:covariance_growth}, both the experimental and ROM covariance exhibit a growing tendency over the entire time horizon, for this specific case. This indicates that the learned stochastic ROM well captures the increasing variability observed in the experimental reduced-state dynamics.
This also implies that, although the weak covariance error in~\Cref{fig:covariance_growth} gradually increases over time, this growth occurs while the ROM covariance follows the same overall trend as the experimental covariance.
This suggests that the error is primarily associated with small discrepancies in the covariance evolution rather than an unstable covariance growth. 
\par
Overall, these analyses provide a diagnostic perspective for interpreting covariance errors in stochastic OpInf ROMs. Specifically, when larger covariance errors occur, examining the spectral properties of the state transition matrix~$\widehat{\cT}_k$ at each time step $k$, which governs covariance propagation through the implicit Euler--Maruyama scheme, can help identify whether the learned drift dynamics contribute to inaccurate covariance evolution.
Lastly, the covariance error can also originate from inaccurate learning of the diffusion operator~$\widehat{\bM}$, included in the second term of Eq.~\eqref{eq:covairance_evolution_implicit_EM_scheme}.
Since the diffusion operator directly determines the covariance contribution generated by the stochastic forcing, inaccurate learning of~$\widehat{\bM}$ can lead to discrepancies in the covariance evolution even when the drift operators are accurately learned. 
Taken together, these observations support our discussions under~\Cref{fig:mean_covariance_error_propagation} as to why capturing the covariance is more challenging than capturing the mean alone.

\subsection{Comparison of experimental and reconstructed ROM wave fields}
\label{ss:wave_fields_comparison}
\begin{figure}[!ht]
    \centering
    % Column headers
    \begin{tabular}{p{0.33\linewidth}p{0.25\linewidth}p{0.35\linewidth}}
    \centering\textbf{Experiment} &
    \centering\textbf{ROM} &
    \centering\textbf{Error}
    \end{tabular}
    \begin{subfigure}[t]{1.00\linewidth}
        \centering
        \includegraphics[width=1.00\linewidth]{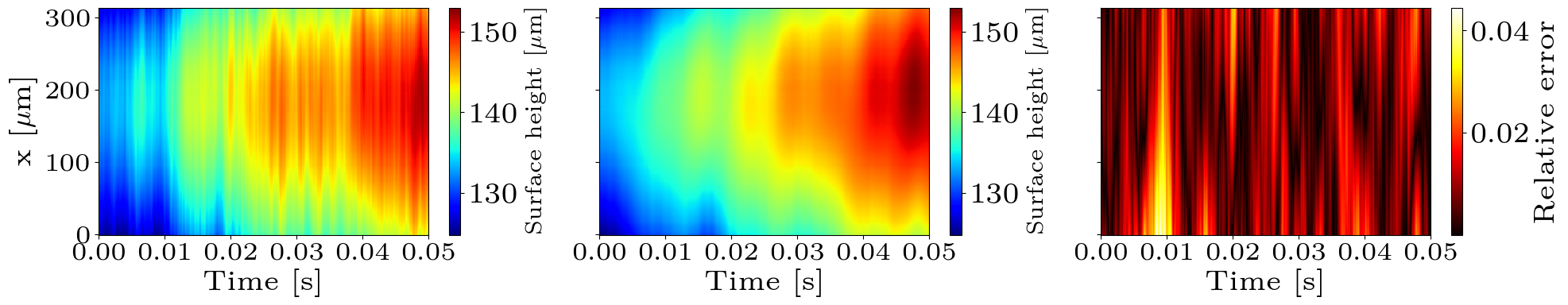}
        % \caption{The case of $\textnormal{Ca}_\textnormal{ac} = 0.83 \times 10^{-3}$.}
        \label{fig:wave_field_comparison_0p10}
    \end{subfigure}
    \vfill
    \begin{subfigure}[t]{1.00\linewidth}
        \centering
        \includegraphics[width=\linewidth]{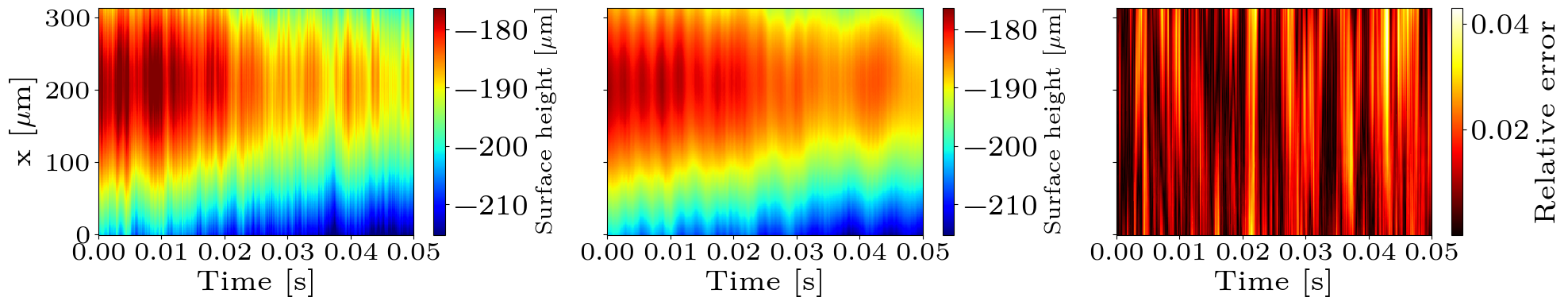}
        % \caption{The case of $\textnormal{Ca}_\textnormal{ac} = 1.15 \times 10^{-3}$.}
        \label{fig:wave_field_comparison_0p15}
    \end{subfigure}
    \vfill
    \begin{subfigure}[t]{1.00\linewidth}
        \centering
        \includegraphics[width=\linewidth]{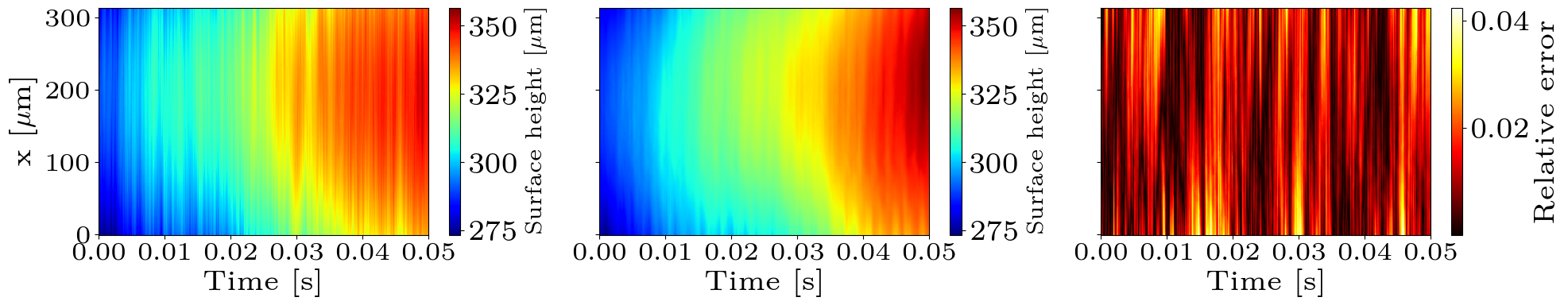}
        % \caption{The case of $\textnormal{Ca}_\textnormal{ac} = 1.76 \times 10^{-3}$.}
        \label{fig:wave_field_comparison_0p20}
    \end{subfigure}
    \vfill
    \begin{subfigure}[t]{1.00\linewidth}
        \centering
        \includegraphics[width=\linewidth]{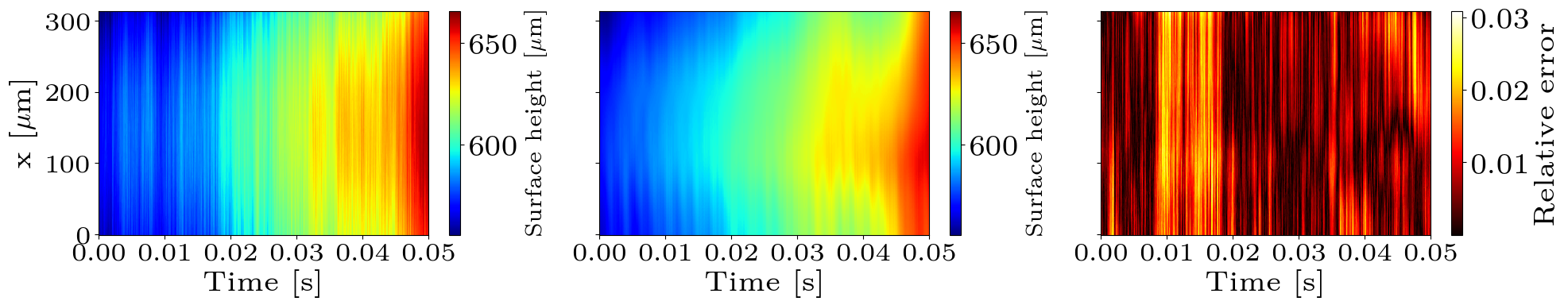}
        % \caption{The case of $\textnormal{Ca}_\textnormal{ac} = 1.99 \times 10^{-3}$.}
        \label{fig:wave_field_comparison_0p25}
    \end{subfigure}
    \vspace{-1em}
    \caption{Comparison of the experimental and ROM-generated wave fields. The pointwise relative error, defined in Eq.~\eqref{eq:pointwise_relative_error_capillary_wave}, is used in the right column. From top to bottom, the rows represent the cases of $\textnormal{Ca}_\textnormal{ac} = 0.83\times10^{-3}$, $1.15\times10^{-3}$, $1.76\times10^{-3}$, and $1.99\times10^{-3}$.}
    \label{fig:capillary_wave_field_comparison}
\end{figure}
After simulating the learned stochastic ROM simulated in the reduced space using the initial conditions of the $L$ reduced state samples from the experiment (see~\Cref{tab:experimental_datasets_capillary_wave} for $L$ of each case), each resulting trajectory is lifted back to the original high-dimensional space $\real^n$ via the POD basis used for dimensionality reduction.
As we explained in~\Cref{sss:stochastic_ROM_error_metrics}, our goal is to obtain accurate ROMs in the weak sense; getting strong approximations for the individual stochastic processes is infeasible due to the inaccessibility to the process noise in the DHM system. Thus, the wave-field comparison is done in the weak sense by comparing ensemble-averaged wave fields rather than the individual sample's field.
Similar to the empirical sample mean of the reduced states defined in Eq.~\eqref{eq:empirical_mean_reduced_state}, the mean wave field is computed by averaging over the $L$ sample trajectories as
$$
\bE^{L, (\cdot)}_k = \frac{1}{L} \sum_{j=1}^{L}\bX^{(\cdot)}(t_k, \omega_j) \in\mathbb{R}^{n},
$$
where $\bX^{(\cdot)}(t_k,\omega_j)$ denotes the $n$-dimensional experimental or reconstructed ROM wave field ($(\cdot) \in \{\textnormal{exp},\textnormal{ROM} \}$) at the $k$th time step and the $j$th sample trajectory.
The mean wave surface height at the $i$th spatial location and $k$th time step is denoted as
$
\bE_{i,k}^{L,(\cdot)} = \left[ \bE^{L,(\cdot)}_k \right]_{i},
$
where $i=1,\ldots,200$.
The pointwise relative error between the experimental and ROM mean wave fields is then defined as
\begin{equation}
\Upsilon_{i,k}^{\mathrm{rel}} \coloneqq \frac{\left|\bE_{i,k}^{L,\mathrm{exp}}-\bE_{i,k}^{L,\mathrm{ROM}}\right|}{\left|\bE_{i,k}^{L,\mathrm{exp}}\right|}.
\label{eq:pointwise_relative_error_capillary_wave}
\end{equation}
\Cref{fig:capillary_wave_field_comparison} compares the experimental and reconstructed mean wave fields over the one-dimensional spatial domain $x\in[0,300]\,\upmu\mathrm{m}$ and the time interval $t\in[0,0.05]\,\mathrm{s}$, for several representative cases.
For the four cases of $\textnormal{Ca}_\textnormal{ac} = 0.83\times10^{-3}$, $1.15\times10^{-3}$, $1.76\times10^{-3}$, and $1.99\times10^{-3}$, the selected ROM dimensions are $r=23$, $19$, $26$, and $5$, respectively.
Overall, the stochastic ROM accurately captures the spatiotemporal evolution of the experimental wave fields across all cases presented.
The reconstructed wave patterns closely match the experimental data, while the pointwise relative error remains low throughout the spatial and temporal domains.
In particular, the maximum relative error is below $0.04$ for all cases, demonstrating the ability of our ROMs to faithfully capture the complex dynamics of capillary wave phenomena.

\subsection{Comparison of power spectral densities}
\label{ss:PSD_comparison_from_ROM results}
We now also evaluate the fidelity of the ROM reconstructions in the frequency domain.
To do so, we compute power spectral densities (PSDs) via Welch's method, for both the experimental data and the corresponding reconstructions generated by the learned stochastic ROM over the same batch of realizations. We compare them to assess how well the the ROMs capture the spectral energy distribution of the experimental data.
\begin{figure}[!ht]
\centering
\begin{subfigure}[t]{0.49\linewidth}
    \centering
    \includegraphics[width=\linewidth]{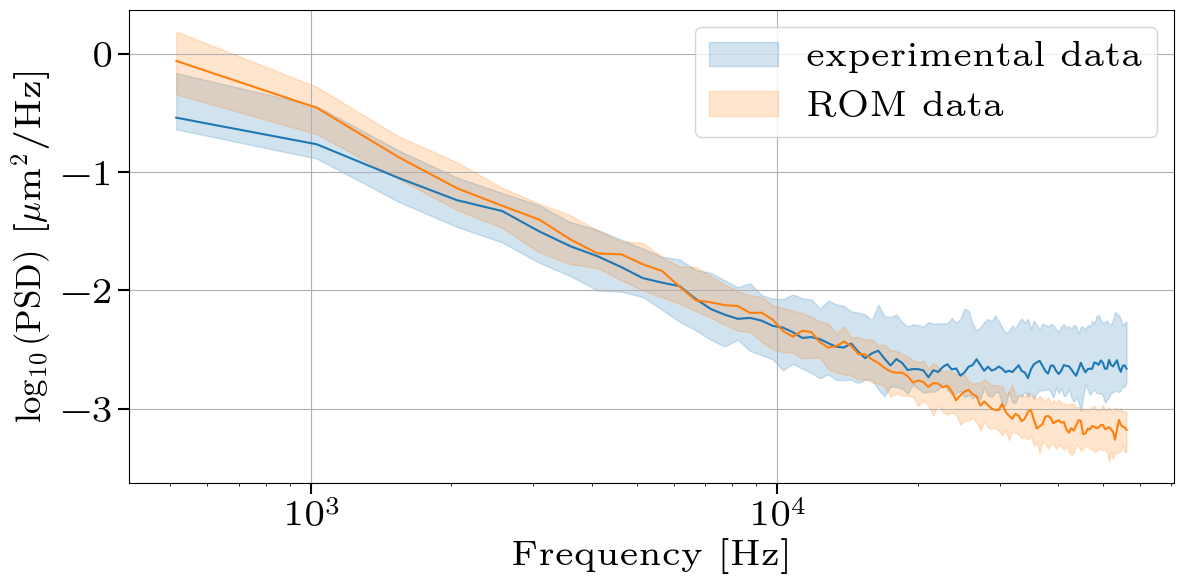}
    % \caption{The case of $\textnormal{Ca}_\textnormal{ac} = 2.84 \times 10^{-3}$.}
    \label{fig:PSD_comparison_0p30}
\end{subfigure}
% \hfill
\begin{subfigure}[t]{0.49\linewidth}
    \centering
    \includegraphics[width=\linewidth]{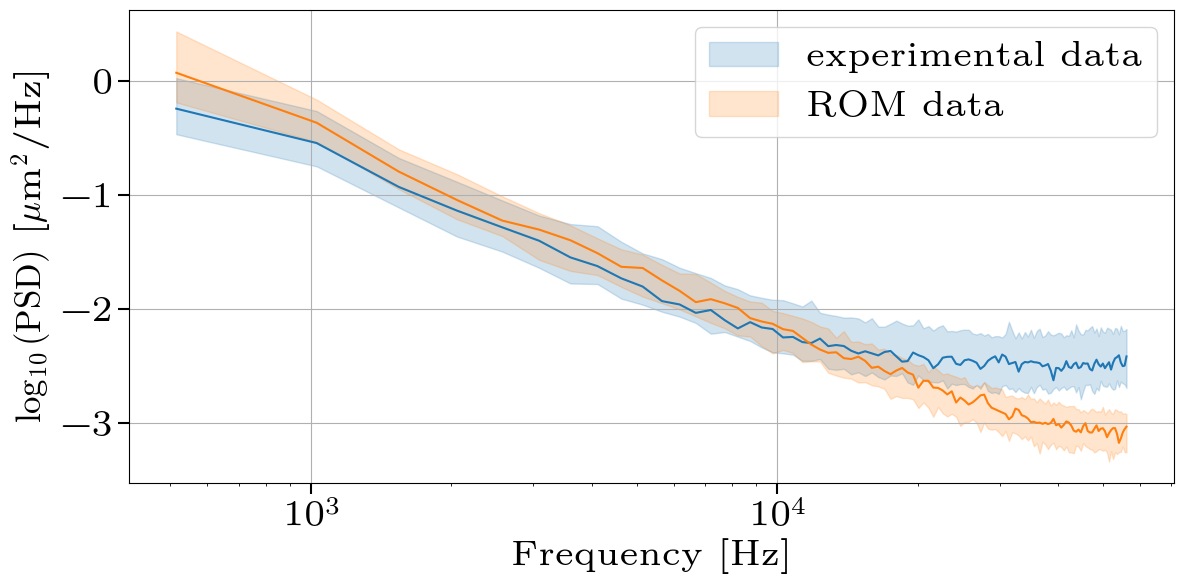}
    % \caption{The case of $\textnormal{Ca}_\textnormal{ac} = 3.32 \times 10^{-3}$.}
    \label{fig:PSD_comparison_0p35}
\end{subfigure}
\vspace{-1em}
\caption{Comparison of the power spectral densities (PSDs) computed from the time series at the midpoint of the 1D spatial domain. The left and right plots represent the cases of $\textnormal{Ca}_\textnormal{ac} = 2.84 \times 10^{-3}$ and $\textnormal{Ca}_\textnormal{ac} = 3.32 \times 10^{-3}$, respectively. For each case, PSDs are computed from all $L=272$ realizations; the shaded band shows their spread, and the solid curve shows their mean.}
\label{fig:PSD_comparison}
\end{figure}
\Cref{fig:PSD_comparison} shows the comparison of PSDs between the experimental data and the reconstructed ROM states. 
For both cases shown, the PSDs obtained from $L=272$ ROM trajectories match reasonably well with those from the experimental data, particularly for $f < 10^{4} \, [\textnormal{Hz}]$.
This indicates that the ROM reproduces the overall spectral content and energy levels at low and intermediate frequencies.
This partially reflects the regularization strategy discussed in~\Cref{ss:regularization_hyperparameters_selection_based_on_spectral_contents}; without it, spurious aliased content and an inaccurate amount of stochastic energy would otherwise distort the ROM-produced spectra.
At higher frequencies ($f > 10^{4} \, [\textnormal{Hz}]$), however, the ROM-reconstructed PSDs underestimate the experimental spectra by approximately half an order of magnitude, for both cases.
This may be due to the inherently low-dimensional nature of the ROM. The selected ROM dimensions are $r=15$ and $r=13$ for the cases $\textnormal{Ca}_\textnormal{ac} = 2.84 \times 10^{-3}$ and $\textnormal{Ca}_\textnormal{ac} = 3.32 \times 10^{-3}$, respectively, which may be insufficient to resolve high-frequency spectral content.
Since the POD basis is truncated to retain the most energetic structures in the data, the lower-energy, finer-scale portions of the dynamics that are associated with high-frequency capillary wave surface fluctuations are captured less accurately. 
Nonetheless, the spectral energy decay up to that range is reasonably captured, indicating that the ROM reproduces the dominant scales of the capillary wave dynamics well.

\subsection{Predictive capability for physical fidelity of the stochastic OpInf ROM}    
\label{ss:Testing_performance_of_stochastic_ROM}

\subsubsection{Bispectrum and bicoherence}
\label{sss:bispectrum_bicoherence}
Capillary wave turbulence exhibits nonlinear three-wave resonant interactions, and the bispectrum and bicoherence quantify the strength and phase coherence of these couplings~\cite{kim1979digital}; see~\cite{orosco2023identification, pan2017understanding} for the use of these quantities for capillary wave turbulence.
We consider these quantities as indicators for how well the learned ROMs capture the salient physics captured in the data.
To this end, we first compute a wavelet transform of the time-dependent surface displacement data of a single spatial point (the midpoint in our case), $X(t) \mapsto \widetilde{X}(t,f)$, where $X(t)$ is the surface displacement time series and $\widetilde{X}(t,f)$ is its wavelet representation in time-frequency space.
The bicoherence is then defined as
\begin{equation}
    b_{1,2} = \frac{\left| \left< \widetilde{X}(t, f_1) \cdot \widetilde{X}(t, f_2) \cdot \widetilde{X}^*(t, f_1 + f_2) \right>_t  \right|}{\left< \left| \widetilde{X}(t,f_1) \cdot \widetilde{X}(t,f_2) \cdot \widetilde{X}^*(t,f_1+f_2) \right| \right>_t},
\label{eq:bicoherence}
\end{equation}
with $<\cdot>_t$ denoting the time average, $|\cdot|$ the modulus, and $*$ the complex conjugate. The numerator is the modulus of the bispectrum, which retains the amplitude of the phase-averaged triple product.
The bicoherence~\eqref{eq:bicoherence} quantifies the strength of quadratic (three-wave) phase coupling among frequencies $f_1$, $f_2$, and $f_3=f_1+f_2$, measuring whether the interaction between $f_1$ and $f_2$ generates energy at their sum frequency.
It is normalized to the range $[0,1]$, where $b_{1,2}=0$ and $b_{1,2}=1$ indicate zero and perfect phase coupling, respectively.
\subsubsection{ROM testing setup}
\label{sss:testing_setup}
To assess the predictive capability of our stochastic OpInf ROM, we evaluate its performance on unseen experimental data samples collected under the same experimental setup (i.e., $\mathrm{Ca}_\textnormal{ac}$) used for training.
This corresponds to a scenario where additional experiments are conducted under identical conditions. Due to the inherent stochastic variability of the capillary wave dynamics, these new experiments yield different stochastic realizations of the wave surface height trajectories than those used for the ROM training.
To this end, we randomly partition the full set of samples defined in Eq.~\eqref{eq:sample_size} into training and testing subsets of equal size, so the training to testing ratio is 1:1.
Specifically, the total number of $L$ sample trajectories is randomly divided into $L_{\textnormal{train}}$ training trajectories and $L_{\textnormal{test}}$ testing samples (e.g., $L=272$ yields $L_\textnormal{train} = L_\textnormal{test} = 136$).
We train the stochastic OpInf ROM with the training samples, using the proposed ROM dimension selection method. 
To simulate the learned ROM, we project the testing data onto the POD basis obtained from the training data. The projected initial reduced state samples are then used as initial conditions for the ROM tests.
After the ROM simulation, the resulting reduced state predictions are lifted to the high-dimensional space, where we compute the bispectrum and bicoherence.
\begin{figure}[!ht]
\centering
\begin{subfigure}[t]{1.00\linewidth}
    \centering
    \includegraphics[width=\linewidth]{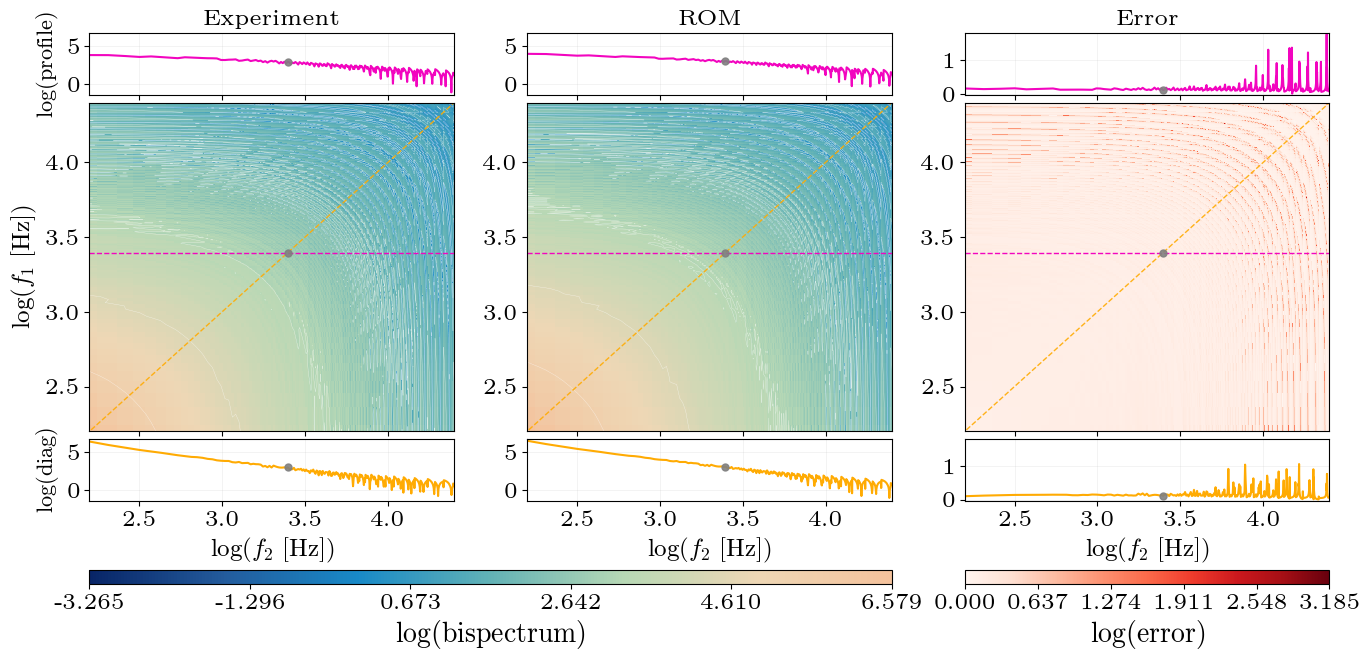}
    % \caption{Bispectrum}
    \label{fig:bispectrum_comparison_testing}
\end{subfigure}
% \hfill
\begin{subfigure}[t]{1.00\linewidth}
    \centering
    \includegraphics[width=\linewidth]{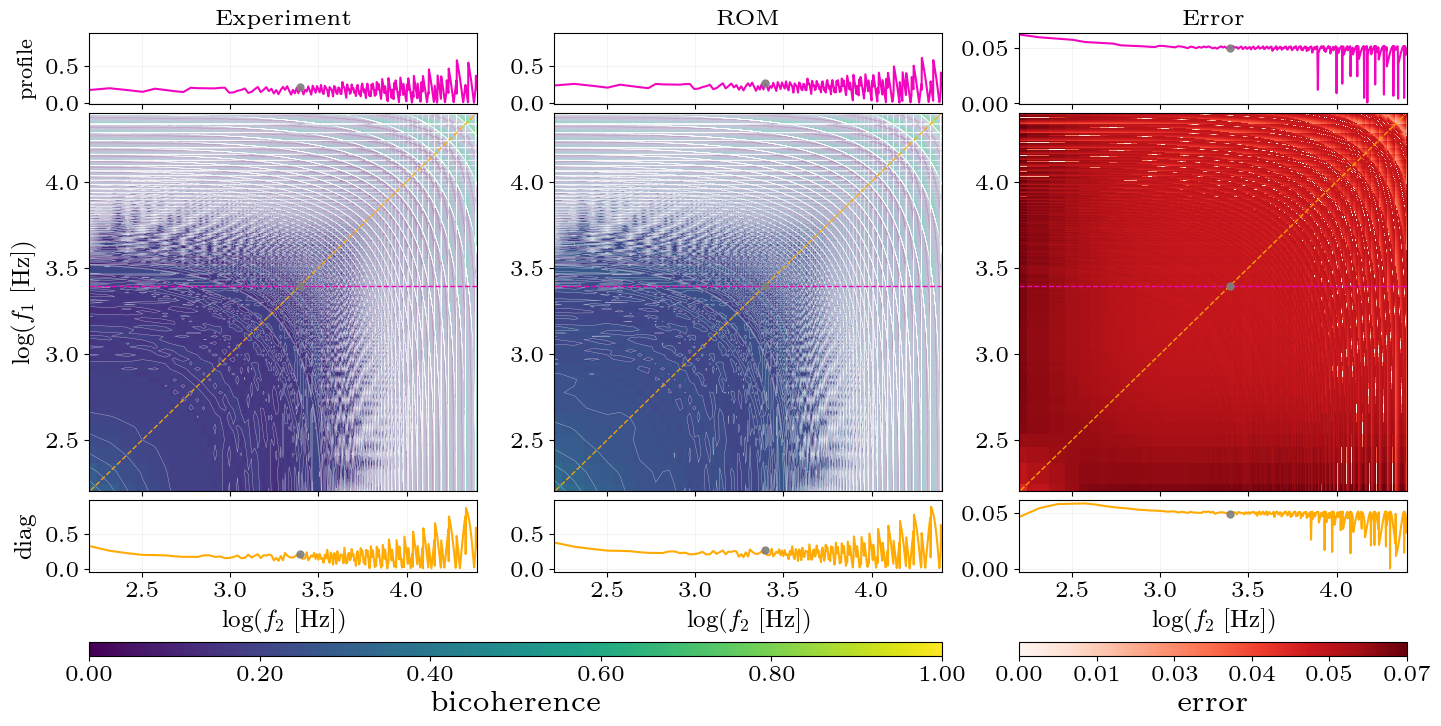}
    % \caption{Bicoherence}
    \label{fig:biscoherence_comparison_testing}
\end{subfigure}
\vspace{-1em}
\caption{ROM testing results on unseen datasets: bispectrum (top) and bicoherence (bottom). For each quantity, the upper profile shows a horizontal profile of $\log(f_1 \, [\textnormal{Hz}])=3.40$, which is arbitrarily extracted for the off-diagonal frequency combination ($f_1 \neq f_2$). The lower profile for both quantities represents the diagonal profile of the contour plots, corresponding to $f_1 = f_2$. The pointwise absolute errors are shown for each quantity.}
\label{fig:bispectrum_bicoherence_testing_performance}
\end{figure}
\par
\subsubsection{ROM testing results}
\label{sss:testing_results}
\Cref{fig:bispectrum_bicoherence_testing_performance} compares the bispectrum and bicoherence, computed using the held-out testing data from the experiments and the ROM-predicted results emanating from those held-out test initial conditions. 
The case of $\textnormal{Ca}_\textnormal{ac} = 0.25 \times 10^{-3}$ is considered.
For both the experimental and ROM results, we first compute the mean trajectory across the $L_\textnormal{test}=136$ testing realizations. The resulting mean time series are then used to compute the bicoherence via Eq.~\eqref{eq:bicoherence}, whose numerator is the bispectrum.
For the bispectrum in~\Cref{fig:bispectrum_bicoherence_testing_performance}, the ROM predicts the overall three-wave coupling strength well across the entire 2D frequency domain. Especially, along the representative off-diagonal slice at fixed $\log(f_1 \, [\textnormal{Hz}]) = 3.4$, the absolute error remains small up to approximately $f_2 = 10^4 \, [\textnormal{Hz}]$.
The diagonal of the bispectrum, where $f_1 = f_2 = f$, captures quadratic self-coupling of a wave at frequency $f$ with itself, producing a component at the second harmonic $2f$ through the triad interaction $f + f = 2f$.
Therefore, a large bispectrum value along the diagonal indicates strong phase-coupled second harmonic generation, the signature of nonlinear wave interactions.
The bispectrum error along the diagonal similarly remains small at low frequencies, but begins to increase around $f=5{,}000 \, [\textnormal{Hz}]$. 
In contrast, the bicoherence error remains below approximately $0.07$ (on a scale from $0$ to $1$) for both the off-diagonal and diagonal profiles.
This result highlights that our stochastic ROM accurately predicts the phase-coupling structure of the nonlinear wave interactions for capillary wave turbulence, well beyond the training data.

% \subsection{Computational time}
% \label{ss:stochastic_OpInf_computational_time}
% \begin{itemize}
%     \item In this section, we briefly provide the computational time in this stochastic ROM pipeline. 

%     \item 

% \end{itemize}

%%%%%%%%%%%%%%%%%%%%%%%%%%%% Conclusion %%%%%%%%%%%%%%%%%%%%%%%%%%%%%%%%
\section{Conclusions} \label{sec:conclusion}
In this work, we extended a stochastic OpInf ROM framework by providing its first demonstration through real experimental data.
With several improvements proposed in this work, we showed that stochastic OpInf~\cite{freitag2025learning} can learn low-dimensional SDE representations of complex microscale capillary wave turbulence directly from ultra-high-speed digital holographic microscopy measurements, enabling reduced-order modeling of experimental systems where governing equations and first-principles numerical discretizations are not readily available. 
In particular, we proposed a new ROM dimension selection strategy for stochastic OpInf that provides an alternative to conventional energy-based criteria. By explicitly accounting for both weak mean and covariance errors, the proposed approach identifies ROM dimensions that accurately capture both the dominant dynamics and the stochastic variability of the ROM-state distribution.
We further added Tikhonov regularization to the numerical implementation of operator learning with the goal to improve numerical robustness, and then analyzed it to show that it has a physically meaningful role in stochastic OpInf by influencing the spectral content of the learned stochastic dynamics.
We also investigated the covariance evolution of stochastic ROMs under the implicit Euler--Maruyama time integration scheme. This analysis revealed that covariance dynamics are governed by the coupled effects of the learned drift and diffusion operators, providing a diagnostic perspective for interpreting covariance errors and assessing stochastic ROM behavior.
The resulting stochastic OpInf ROMs accurately captured the statistical, spectral, and physical characteristics of the experimental capillary wave turbulence system across multiple excitation conditions.
Specifically, the ROMs achieved low weak mean and covariance errors, accurately reproduced the PSDs, and successfully predicted nonlinear three-wave coupling in previously unseen experimental realizations---demonstrated via bispectral and bicoherence analyses.
\par
Future work could explore more expressive parameterizations of the drift and diffusion operators, including machine learning-based approaches, to further improve the modeling capability of stochastic ROMs for strongly nonlinear and random dynamics.
Furthermore, using the full two-dimensional capillary wave field for a machine learning approach, rather than one-dimensional profiles used in this study, would provide a more comprehensive representation of the dynamics.
Also, parametric extensions of stochastic ROMs could be investigated to capture variations in the system dynamics across different parameter values and operating conditions, such as the forcing amplitude in our case.
Finally, while the proposed framework uses Wiener processes as stochastic forcing, the reduced-state distributions obtained from the experimental measurements, particularly the initial states used for ROM simulations, exhibited deviations from Gaussianity. 
Although we investigated a temporally correlated stochastic forcing based on different versions of fractal Gaussian noise processes, it did not yield noticeable performance improvements.
Future work may therefore investigate other classes of non-Gaussian and colored stochastic processes, or more general stochastic representations, to better capture the statistical characteristics of experimental systems. 

%%%%%%%%%%%%%%%%%%%%%%%% Contribution statement %%%%%%%%%%%%%%%%%%%%%%%%%%%%%
\section*{CRediT authorship contribution statement}
\textbf{Hyeonghun Kim}: Writing – Original Draft, Writing – Review \& Editing, Conceptualization, Methodology, Software, Validation, Formal analysis, Investigation, Data curation, Visualization;
\textbf{Lei Zhang}: Writing - Review \& Editing, Software, Data curation, Investigation;
\textbf{James Friend}: Writing: Review \& Editing, Investigation, Resources, Supervision, Project administration, Funding acquisition;
\textbf{Boris Kramer}: Writing – Review \& Editing, Conceptualization, Methodology, Validation, Formal analysis, Supervision, Project administration, Funding acquisition.

%%%%%%%%%%%%%%%%%%%%%%%% code %%%%%%%%%%%%%%%%%%%%%%%%%%%%%
\section*{Data availability}
The public repository \href{https://github.com/hyeonghun1/CapillaryWaveTurbulence}{https://github.com/hyeonghun1/CapillaryWaveTurbulence} contains a collection of Jupyter notebooks and Python scripts, implemented in Python 3.10, containing the code and data used in this study.
The complete experimental dataset is publicly available through~\cite{capillary_wave_datasets}. 

%%%%%%%%%%%%%%%%%%%%%%%% Acknowledgement %%%%%%%%%%%%%%%%%%%%%%%%%%%%%
\section*{Acknowledgments}
This research was financially supported by the Defense Advanced Research Projects Agency (DARPA) Defense Sciences Office (DSO) through award \#HR0011-25-3-0132 for project STORMLAP: Strong Turbulence and Rogue-Wave Modeling using Machine-Learning Assisted Predictions.

%%%%%%%%%%%%%%%%%%%%%%%% Tool disclosure %%%%%%%%%%%%%%%%%%%%%%%%%%%%%
\section*{Tool and computational resource disclosure}
During the preparation of this work, H.K. used ChatGPT and Claude Sonnet 5 to assist with improving the clarity of existing sentences, identifying potentially relevant references, and troubleshooting programming-related issues. L.Z. used ChatGPT to assist with troubleshooting programming-related issues.
All suggestions generated by these tools, including references, were independently verified by H.K. and L.Z. The author(s) reviewed and edited the content as needed and take(s) full responsibility for the content of the published article.

%%%%%%%%%%%%%%%%%%%%%%%%%%%% Appendix %%%%%%%%%%%%%%%%%%%%%%%%%%%%%%%%
% \newpage
%
% \appendix
%

% \section{Comparison of deterministic and stochastic Operator Inference reduced-order models of different model forms} \label{app:Comparison_of_bilinear_and_linear}

% \section{Discrete-time covariance dynamics under the implicit Euler-Maruyama scheme} \label{app:covariance_evolution_under_implicit_EM_scheme}

% \begin{itemize}
    
% \end{itemize}

%%%%%%%%%%%%%%%%%%%%%%%%%%%%%%%%%%%%%%%%
% BIBLIOGRAPHY
%%%%%%%%%%%%%%%%%%%%%%%%%%%%%%%%%%%%%%%
% \newpage
\bibliography{references}
\bibliographystyle{abbrv}

%\bibliographystyle{alphaurl}
% \bibliographystyle{plain}   % alphabetical order
% \bibliographystyle{unsrt}

%\newpage
%\begin{appendix}
%    \input{appendix}
%\end{appendix}
%
\end{document}